\def\IF{{\bf if}\ }
\def\ELSE{{\bf else}\ }
\def\WHILE{{\bf while}\ }
\def\FOR{{\bf for}\ }
\def\RETURN{{\bf return}\ }
\def\blo{\noindent
\begin{tabular}{@{\quad}l@{\quad}}
\begin{minipage}{1in}
\begin{tabbing}
\qquad\=\qquad\=\qquad\=\qquad\=\qquad\=\qquad\=\qquad\=\kill}
\def\elo{\end{tabbing}\end{minipage}\\\end{tabular}}
\newenvironment{pseudocode}{\blo}{\elo}
\theoremstyle{definition}
\newtheorem{theorem}{Theorem}
\newtheorem{definition}[theorem]{Definition}
\newtheorem{lemma}[theorem]{Lemma}
\newcommand{\bfa}{\emph{bfa}\xspace}
\renewcommand{\emptyset}{\varnothing}
\newcommand{\poly}{\operatorname{poly}}
\newcommand{\capacity}{\operatorname{capacity}}
\newcommand{\cost}{\operatorname{cost}}
\newcommand{\size}{\operatorname{size}}
\newcommand{\var}[1]{\mathit{#1}} 
\newcommand{\func}[1]{\textsf{#1}} 
\newcommand{\x}[3][x]{#1(#2,#3)}
\title{The Subset Assignment Problem for Data Placement in Caches}
\author[1]{Shahram Ghandeharizadeh\thanks{shahram@dblab.usc.edu}}
\author[2]{Sandy Irani\thanks{irani@ics.uci.edu}}
\author[3]{Jenny Lam\thanks{jenny.lam01@sjsu.edu}}
\affil[1]{Department of Computer Science, University of Southern California}
\affil[2]{Department of Computer Science, University of California, Irvine}
\affil[3]{Department of Computer Science, San Jos\'{e} State University}
\begin{document}

\maketitle

\begin{abstract}
We introduce the subset assignment problem in which items of varying sizes are placed in a set of bins with limited capacity. Items can be replicated and placed in any subset of the bins. Each (item, subset) pair has an associated cost. Not assigning an item to any of the bins is not free in general and can potentially be the most expensive option. The goal is to minimize the total cost of assigning items to subsets without exceeding the bin capacities. This problem is motivated by the design of caching systems composed of banks of memory with varying cost/performance specifications. The ability to replicate a data item in more than one memory bank can benefit the overall performance of the system with a faster recovery time in the event of a memory failure. For this setting, the number $n$ of data objects (items) is very large and the number $d$ of memory banks (bins) is a small constant (on the order of $3$ or $4$). Therefore, the goal is to determine an optimal assignment in time that minimizes dependence on $n$. The integral version of this problem is NP-hard since it is a generalization of the  knapsack problem. We focus on an efficient solution to the LP relaxation as the number of fractionally assigned items will be at most $d$. If the data objects are small with respect to the size of the memory banks, the effect of excluding  the fractionally assigned data items from the cache will be small. We give an algorithm that solves the LP relaxation and runs in time $O(\binom{3^d}{d+1} \poly(d) n \log(n) \log(nC) \log(Z))$, where $Z$ is the maximum item size and $C$ the maximum storage cost.
\end{abstract}

\section{Introduction}

We define a combinatorial optimization problem which we call the \emph{subset assignment problem}. An instance of this problem consists of $n$ items of varying sizes and $d$ bins of varying capacities. Any  item can be replicated and assigned to multiple bins. 
A problem instance also includes $n\cdot 2^d$ cost parameters which denote for each item and each subset of the bins, the cost of storing copies of the item on that subset of the bins. 
The objective  is to find an assignment of items to subsets of bins which minimizes the total cost  
subject to the constraint that the sum of the sizes of items assigned to each bin does not exceed the capacity
of the bin.

The costs do not necessarily exhibit any special properties, although we do assume that they are non-negative. For example, we do not assume that cost necessarily increases or decreases
the more bins an item is assigned to. Assigning an item to the empty set, which corresponds to not assigning the item to any of the
bins, is not free in general and can potentially be the most expensive option for an item.

The subset assignment problem is a natural generalization of the multiple knapsack problem (MKP), in which each item can only be stored on a single bin. 
The book by Martello and Toth~\cite{MarTot-KnapProb-1990} and the more recent book by Kellerer et al.~\cite{KelPfePis-KnapProb-2004} both devote a chapter to MKP. 
A restricted version of the subset assignment problem in which each item can only be stored in a single bin
corresponds to the multiple knapsack problem (MKP). MKP
 is known to be NP-complete but does have a polynomial time approximation scheme~\cite{CheKha-SODA-2000}.
For the application we are interested in, the number of items $n$ is very large (on the order of billions) and the number of bins is a small constant (on the order of $3$ or $4$). Furthermore, the size of even the largest item is small with respect to the capacity of the bins.
Since there is an optimal solution for the linear programming relaxation in which at most $d$ items are fractionally placed, the effect of excluding the fractionally assigned items from the cache is negligible.  Therefore, we focus on an efficient solution to the linear programming relaxation. 

The linear relaxation of MKP can be expressed as a minimum cost flow on a bipartite graph, a classic and well studied problem in the literature~\cite{AhuMagOrl-NetwFlow-1993}.  Tighter analysis for the case of minimum cost flow on an imbalanced bipartite graph ($n >> d$) is given in~\cite{GusMarFer-SIAMJComp-1987} and improved in~\cite{AhuOrlSteTar-SIAMJComp-1994}.
Naturally, the goal with highly imbalanced bipartite graphs (which
corresponds to the situation
in the subset assignment problem in which  the number of items is much larger than the number of bins) is to minimize dependence on $n$, even at the expense of greater dependence on $d$.

We propose an algorithm for the linear programming relaxation of the subset assignment problem that is similar in structure to cycle canceling algorithms for min-cost flow and is also inspired by the concept of a bipush, which is central to the tighter analysis~\cite{GusMarFer-SIAMJComp-1987} and~\cite{AhuOrlSteTar-SIAMJComp-1994}. The analysis shows that our  algorithm runs in $O(f(d) \poly(d) n \log(n) \log (nC) \log (Z))$, where $C$ is the maximum cost of storing an item on any subset of the bins and $Z$ is the maximum size of any item.
The function $f(d)$ is defined to be the number of distinct sets of vectors $\{\vec{v}_1, \ldots, \vec{v}_r \}$
where $\vec{v}_i \in \{-1, 0, 1\}^d$ and the  solution $\vec{\alpha}$ to the equation
$\sum_{i=1}^r \alpha_i  \vec{v}_i = \vec{0}$ with  $\alpha_1 = 1$ is unique and positive ($\vec{\alpha} > 0$).
In order for the solution $\vec{\alpha}$ to be unique, the first $r-1$ vectors must be linearly independent and therefore $r \le d+1$.
If $r < d+1$, the set can be uniquely expanded to a set of size $d+1$
such that the solution to $\sum_{i=1}^r \alpha_i  \vec{v}_i = \vec{0}$ with  $\alpha_1 = 1$ is unique and non-negative ($\vec{\alpha} \ge 0$).
This observation gives an  upper bound of $\binom{3^d}{d+1}$ for $f(d)$,
 hence the running time
of  $O(\binom{3^d}{d+1} \poly(d) n \log(n) \log (nC) \log (Z))$.
Numerical simulation has shown that $f(3) = 778$ and $f(4) = 531,319$.
Since the problem specification requires $n\cdot 2^d$ cost parameters, an exponential dependence on $d$ is unavoidable.
A direction for future research is to reduce the dependence on $d$ from exponential in $d^2$ to exponential in $d$.

A reasonable assumption for the cache assignment problem (the motivating application for the subset
assignment problem) is that it is never advantageous to replicate an item in more than two bins.
Under this assumption, the vectors can have at most 4 non-zero entries. So the function $f(d)$ is bounded by $d^{O(d)} \poly(d)$ resulting in an
overall running time of
$O(d^{O(d)} \poly(d) n \log(n) \log (nC) \log (Z))$.

In a linear programming formulation of the problem there are $2^d n$ variables and $n + d$ constraints. The best polynomial
time algorithms to solve a general instance of linear programming require time at least cubic in $n$ which for the
values we consider is prohibitively large. (For example, Karmarkar's algorithm requires $O(N^{3.5}L)$ operations
in which $N$ is the number of variables and all input numbers can be encoded with $O(L)$ digits \cite{Kar84}.)
Our algorithm is closer to the Simplex algorithm in that after each iteration, the current solution is a basic feasible solution.
However, the algorithm does not necessarily traverse the edges of the simplex. The algorithm selects an optimal local
improvement which, in general, can result in a solution which is not a basic feasible solution and then restorse the solution to
a basic feasible solution without increasing the cost. It is not clear how to bound the running time of any implementation of
the Simplex algorithm in which the algorithm is bound to traverse edges of the simplex.
Since the problem can be formulated as a packing problem, there is a randomized approximation algorithm whose solution
is within a factor of $1 +\epsilon$ of optimal and whose running time is 
$O(2^d \poly(d) n \log n / \epsilon^2)$ \cite{KY14}.



\subsection{Motivation}

The subset assignment problem is motivated by the problem of managing a multi-level cache. 
Although caches are used in many different contexts, we are particularly interested in the use of caches to augment database management systems. Query results (called {\em key-value pairs}) are stored in the cache so that the next time the query is issued, the result can be retrieved from memory instead of recomputed from scratch. Typically key-value pairs vary in size as they contain different types of data.
Furthermore, re-computation can vary dramatically, depending on the application. 
In some applications, a key-value pair is the result of hours of data-intensive computation.
If the key-value pair does not reside in the cache (corresponding to allocating the item to the empty set in our
formulation), this computation cost would be paid every time the key-value pair is accessed.
Memcached, currently the most popular key-value store manager, is used by companies such as Facebook~\cite{saab08}, Twitter and Wikipedia. Today's memcached uses DRAM for fast storage and retrieval of key-value pairs. However, using a cache that consists of a collection of memory banks with different characteristics can potentially improve cost or performance~\cite{flexMemTech}.

We model a  sequence of requests to data items (key-value pairs) in the cache
as a stream of independent events as do social networking benchmarks such as BG~\cite{sumita13} and LinkBench~\cite{linkbench}. 
Cache management (or paging) under i.i.d. request sequences is also well studied in the theory literature \cite{Sta-PerfEval-2001,JelRad-INFOCOM-2003}.
If query and update (read and write) statistics are known in advance, the optimal policy is a static placement of data items in the memory banks that minimizes the expected time to service each request. A static placement can have much better performance over adaptive online algorithms if the request frequencies are stable \cite{camp14, campTechreport}. Since the popularity of queries do  vary over time, a static placement would need to be recomputed periodically based on recent statistics followed with a reorganization of key-value pairs across memory banks.

With the advent of Non-Volatile Memory (NVM) such as PCM, STT-RAM,  NAND Flash, and the (soon to be released) Intel X-Point,
cache designers are provided with a wider selection of memory types with different performance, cost and reliability characteristics. 
The relative read/write latency and bandwidth for different memory types vary considerably.
An important challenge in computer system design is how to effectively design caching middleware that leverages these new choices~\cite{kimpcm2014, flexMemTech, Nan15}. The survey in \cite{Nan15} makes the case that the advent of new storage
technologies significantly changes the standard assumptions in system design and leveraging such technologies will require more sophisticated
workload-aware storage tiering.

In this paper, the cache is composed of a small number of memory banks each of which is a different  type of memory.
The goal is to find an optimal placement of data items in the cache.
Our model also takes into account that a memory bank can fail due to a power outage or hardware failure. (Non-volatile memories do not lose their content during  a power outage but they can experience hardware failures). If a memory bank fails, its contents must be restored, either all at once or over time. In this case, it may be advantageous to store a data item on more than one memory bank so that the data can be more easily recovered in the event that one or more memory banks fail. On the other hand, maintaining multiple copies of a key-value pair can be costly if they must be frequently updated. We express these different trade-offs in an optimization problem by allowing  a key-value pair to be replicated and stored on any subset of the memory banks. The $\emptyset$ option represents not keeping  the key-value pair in the cache at all and recomputing the result from the database at every query, an option that can  be computationally very costly.
Simulation results from \cite{flexMemTech} show that it can be advantageous to store copies of data items in more than one
memory bank to speed up recovery time, although it depends on the read and write frequencies of the data items
as well as the failure rates of the memory. 

\cite{flexMemTech} gives a detailed description for how the memory parameters and request frequencies translate into costs and uses the model to study a closely related problem in which one is given a fixed budget as well as the price for the different types of memory. The goal is to determine the optimal amount of each type of memory to purchase as well as the optimal placement of key-value pairs to memory banks that minimizes expected service time subject to the overall budget constraint. The algorithm in \cite{flexMemTech} is implemented and evaluated
using traces generated by a standard social networking benchmark~\cite{sumita13}.
In this paper, we consider the situation in which the design of  the cache is already determined in that there is a set of memory banks whose capacities are given as part of the problem input. The goal is to place each item on a subset of the memory banks so that the capacities of each memory bank is not exceeded and the total cost is minimized.

In both cases, the objective function is expected  service time for all the items. The cost of serving an item $p$ located on subset $S$ of the memory banks is  a sum of three terms: the total expected time to serve read requests to $p$, the total expected time to 
serve write requests to $p$, and the total expected time needed to restore $p$ in the event that one or more memory bank in $S$ fails. More explicitly:
\begin{align*}
\cost(p, S) &= \text{read-freq}(p) \cdot \text{read-time}(p, S)\\
& + \text{write-freq}(p) \cdot \text{write-time}(p,S)\\
&+ \sum_{F} \text{fail-freq}(F) \cdot \big(\text{read-time}(p, S \setminus F) + \text{write-time}(p, F \cap S).\big)
\end{align*}
The functions read-freq($p$) and write-freq($p$) represent the probability of a read or write request to $p$. The function fail-freq($F$) is the probability that all the memory banks in subset $F$ fail. 
On a request to read an item $p$, item $p$ can be obtained from any of the copies of $p$ in the cache.
Therefore,
read-time($p,S$) represents the time to read $p$ from the memory bank in $S$ that provides the fastest read time.
The time to read a data item from a memory bank depends on the size of the item as well as the latency and bandwidth for reading
from that type of memory.
Updating an item, on the other hand, requires updating every copy of that item in the cache.
Therefore,
write-time($p,S$) is the maximum time to write $p$ to any of the memory banks in $S$, assuming that
 writing $p$ to its multiple destinations is  done in parallel. (If writing is done sequentially, then write-time($p,S$) is
the sum of the write times over  all the memory banks in $S$).
The time to write  a data item to  a memory bank depends on the size of the item as well as the latency and bandwidth for writing
to  that type of memory.
Recovering from failure could involve reading from those memory banks that still have a copy of $p$ and rewriting them to those that lost it. 


Since the relative read/write frequencies for items and read/write times for memories vary significantly,
there is no useful structure to exploit in modeling the cost of assigning an item to a subset
which is why they are assumed to be arbitrary values given as part of the input to the subset assignment problem.
However, based on the empirical failure rates of the memory technologies, it is reasonable to assume that two memory banks will
never  fail at the same time. Under this assumption, there is no need to keep more than two copies of an item in the cache
and we can restrict the data placements for an item to subsets  of size one or two.
The problem is addressed in this paper in its full generality although a better bound on the running time can be obtained
with this restriction.

\section{Problem Definition}

There are $n$  items and each item $p$ has a given  $\size(p)$.
There are  $d$ bins $\mathcal{B} = \{b_1, \ldots , b_d\}$. 
Each bin $b$ has a given  $\capacity(b)$.
An item can be replicated and placed on any subset of the memory banks
$S \subseteq \mathcal{B}$. 
We call $S$ a {\em placement option} for an item.
Placing $p$ on  $S$ has cost denoted by $\cost(p, S) \geq 0$.
A placement of items to memory banks
is described by a set of $n \cdot 2^d$ variables $\x{p}{S} \geq 0$ with the constraint that for each $p$,
\begin{equation}
\label{eq:condition1}
\sum_S \x{p}{S} = \size(p).
\end{equation}
Also the capacity of each bin cannot be exceeded, so for each $b$,
\begin{equation}
\label{eq:condition2}
\sum_{S \ni b} \sum_p \x{p}{S} \leq \capacity(b).
\end{equation}
The goal is to minimize
\[
\sum_{p} \sum_{S} \cost(p, S) \x{p}{S},
\]
subject to the condition that all $x(p,S) \geq0$, (\ref{eq:condition1}) and (\ref{eq:condition2}) above.

The placement option $\emptyset$, corresponding to not placing an item in
any of the bins, is  an option for every $p$, so the problem always has a feasible solution.
For each bin $b$, we will add an extra item $p$ whose size is $\capacity(b)$.
For each added $p$, $\cost(p, \emptyset) = \cost(p, \{b\}) = 0$. For all other
$S \subseteq \mathcal{B}$,
$\cost(p, S) = \infty$. We assume that the pages are numbered so that the extra item
for bin $b_i$ is $p_i$. With the additional items, we can assume that every solution
under consideration has every bin filled exactly to capacity since any extra space in $b_i$
can be filled with $p_i$ without changing the cost of the solution. 
Therefore we require that for each $b$, $\sum_{S \ni b} \sum_p \x{p}{S} = \capacity(b)$.
An assignment which satisfies the equality constraints on the bins is called
\emph{perfectly filled}. 


\section{Preliminaries}

Our algorithm
starts with a feasible, perfectly filled solution and improves  the assignment in a series
of small steps, called augmentations.
The augmentations, a generalization of a negative cycle in min-cost flow, always maintain the condition that the current
assignment is feasible and perfectly
filled. In each iteration
the algorithm finds an augmentation that approximates the best possible augmentation
in terms of the overall improvement in cost.
An augmentation is a linear combination of moves in which mass is moved
from $x(p,S)$ to $x(p,T)$ for some item $p$.
Each move gives rise to a $d$-dimensional vector over $\{-1, 0, 1\}$ that 
denotes the net increase or decrease to each bin as a result of  the move.
We require that the linear combination of vectors for an augmentation
equal $\vec{0}$ in order to maintain the condition that the bins are perfectly filled.
The profile for an augmentation is the set of vectors corresponding
to the moves in that augmentation.
In order to find  a good augmentation, we exhaustively search over all
profiles 
and then find a good set of actual moves that correspond to each profile.
Exhaustively searching over all profiles introduces a factor of $f(d)$, the
number of distinct profiles which is at most $\binom{3^d}{d+1}$.

In order to bound the number of iterations, we also need to establish that 
there is an augmentation that improves the cost by a significant factor.
For flows, this is accomplished by showing that the difference between the current 
solution and the optimal solution can be decomposed into at most $m$ simple 
cycles, where $m$ is the number of edges in the network. If $\Delta$ is the difference between the current and optimal cost, then
there is a cycle that improves the cost by at least $\Delta/m$.
We proceed in a similar way, showing that the difference between two assignments
can be decomposed into at most $2(n+d)$ augmentations any of which can be applied
to the current assignment. Therefore there is an augmentation that improves
the cost by at least $\Delta/2(n+d)$.

\subsection{Augmentations}

For $S \subseteq \mathcal{B}$, $\vec{S}$ is a $d$-dimensional vector whose $i^{th}$ coordinate
is $1$ if $b_i \in S$ and is $0$ otherwise. 
Let $\mathcal{V}$ be the set of all length $d$ vectors over $\{-1, 0, 1\}$.
A set $V \subseteq \mathcal{V}$  is said to be \emph{minimally dependent}
if  $V$ is linearly dependent and no proper subset of $V$ is linearly dependent.
If $V = \{\vec{v}_1, \ldots, \vec{v}_r\}$ is minimally dependent, then the values $\alpha_1, \ldots, \alpha_r$ such that
$\sum_{i=1}^r \alpha_i \vec{v}_i = \vec{0}$
are unique up to a global constant factor. In order to make a unique vector $\vec{\alpha}$, we always
maintain the convention that $\alpha_1 = 1$.
A minimally dependent set $V$ is said to be \emph{positive} if the associated vector $\vec{\alpha} > \vec{0}$.

A \emph{move} is defined by a triplet $(p, S, T)$ that represents the possibility of moving mass from
$\x{p}{S}$ to $\x{p}{T}$.
The \emph{profile} for a set of moves 
\[
\{(p_1,S_1, T_1), \ldots, (p_r, S_r, T_r)\}
\]
is the set of vectors $\{ (\vec{T}_1 - \vec{S}_1), \ldots, (\vec{T}_r - \vec{S}_r) \}$.
Note that the vector $\vec{T} - \vec{S}$ represents the net increase or decrease to each bin
that results from
moving one unit of mass from $\x{p}{S}$ to $\x{p}{T}$ for some $p$.
A set of moves is called an \emph{augmentation} if the set of vectors in its profile is
minimally dependent and positive. Note that an augmentation contains at most $d+1$ moves.

An augmentation $\mathcal{A} = \{(p_1,S_1, T_1), \ldots, (p_r, S_r, T_r)\}$ 
can be applied to a particular assignment $\vec{x}$ if for every 
$i = 1, \ldots r$,  $\x{p_i}{S_i} > 0$.
Let $\vec{\alpha}$ be the unique vector of values such that $\alpha_1 = 1$
and $\sum_{j=1}^r \alpha_j (\vec{T}_j - \vec{S}_j) = \vec{0}$.
If the augmentation is applied with magnitude $a$ to $\vec{x}$, then for every
$(p_j, S_j, T_j)  \in \mathcal{A}$,
$\x{p_j}{S_j}$ is replaced with $\x{p_j}{S_j} - a \alpha_j$
and $\x{p_j}{T_j}$ is replaced with $\x{p_j}{T_j} + a \alpha_j$.
The cost vector for an augmentation is $\vec{c}$, where
$c_j = \cost(p_j, T_j) - \cost(p_j, S_j)$.
The cost associated with applying the augmentation with magnitude $a$
is $a (\vec{c} \cdot \vec{\alpha})$.
Since the goal is to minimize the cost,  we only apply augmentations whose cost is
negative.

For augmentation $\mathcal{A} = \{(p_1,S_1, T_1), \ldots, (p_r, S_r, T_r)\}$,
let $\mathcal{S}(\mathcal{A})$ be the set of all pairs $(p, S)$ such that for some $i$,
$p = p_i$ and $S = S_i$.
For each $(p, S) \in \mathcal{S}(\mathcal{A})$, define 
\[
\alpha(p, S) = \sum_{i: p_i = p, S_i = S} \alpha_i.
\]
The maximum magnitude with which the augmentation
$\mathcal{A}$ can be applied to $\vec{x}$ is 
\[
\min_{(p,S) \in \mathcal{S}(\mathcal{A})} \frac{\x{p}{S}}{\alpha(p, S)}.
\]
The following lemma is analogous to the fact  for flows that says
 there is always a cycle in the network representing the difference
between two feasible flows. The proof is given in the Appendix.

\begin{lemma}
\label{lem:vecDiff}
Let $\vec{x}$ and $\vec{y}$ be two feasible, perfectly filled assignments to the same instance of the subset assignment problem.
Then there is an augmentation that can be applied to $\vec{x}$ that consists only of moves of the form
$(p, S, T)$ where $\x{p}{S} > \x[y]{p}{S}$ and $\x{p}{T} < \x[y]{p}{T}$.
\end{lemma}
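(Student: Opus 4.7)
The plan is to first construct a nonnegative ``bulk'' transport of mass that carries $\vec{x}$ into $\vec{y}$ using only excess-to-deficit moves, and then distill from that transport a genuine augmentation by repeatedly peeling off redundant profile vectors while preserving positivity. Concretely, set $\delta(p,S) = x(p,S) - y(p,S)$. From (\ref{eq:condition1}) applied to both assignments, $\sum_S \delta(p,S) = 0$ for each $p$, and since both assignments are perfectly filled, $\sum_p \sum_{S \ni b} \delta(p,S) = 0$ for each bin $b$. For each item $p$, partition the supports by sign into an excess set $E_p = \{S : \delta(p,S) > 0\}$ and a deficit set $D_p = \{S : \delta(p,S) < 0\}$. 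Because $\sum_{S \in E_p}\delta(p,S) = -\sum_{T \in D_p}\delta(p,T)$, any transportation yields nonnegative weights $w(p,S,T) \ge 0$ for $S\in E_p, T \in D_p$ with $\sum_T w(p,S,T) = \delta(p,S)$ and $\sum_S w(p,S,T) = -\delta(p,T)$. Treating each triple $(p,S,T)$ in the support of $w$ as a move shows that applying these moves simultaneously converts $\vec{x}$ into $\vec{y}$; and since both assignments meet every bin to exact capacity, the net bin-effect is zero, yielding
\[
\sum_{(p,S,T)\in\mathrm{supp}(w)} w(p,S,T)\,(\vec{T}-\vec{S}) = \vec{0}.
\]
Every move in $\mathrm{supp}(w)$ satisfies $x(p,S) > y(p,S) \ge 0$ and $x(p,T) < y(p,T)$, so it is an admissible excess-to-deficit move applicable to $\vec{x}$.

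Next I would reduce this bulk positive dependence to a minimally dependent, positive one. Group moves by profile vector $\vec{v} = \vec{T}-\vec{S}$ and let $V$ be the set of distinct profile vectors with aggregate weights $W_v > 0$; by the display above, $\sum_{v\in V} W_v \vec{v} = \vec{0}$. While $V$ is not minimally dependent, some proper subset $V''\subseteq V$ admits another linear dependence $\sum_{v\in V''}\gamma_v \vec{v}=\vec{0}$; flipping the sign if necessary so that at least one $\gamma_v > 0$, set $t=\min\{W_v/\gamma_v:\gamma_v>0\}$ and replace $W_v$ by $W_v - t\gamma_v$. All weights remain nonnegative, at least one becomes zero, and the relation $\sum (W_v - t\gamma_v)\vec{v}=\vec{0}$ persists with strictly smaller support. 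Iterating terminates at a set $V^\star$ together with strictly positive coefficients $\alpha_v$ whose only linear dependence (up to scalar) is the current one, so $V^\star$ is minimally dependent and positive and $|V^\star|\le d+1$. Choosing any one move in $\mathrm{supp}(w)$ for each $v\in V^\star$ and attaching coefficient $\alpha_v$ produces an augmentation of the required form, since each selected move is still excess-to-deficit with $x(p,S)>0$ (and $E_p\cap D_p=\emptyset$ ensures every profile vector is nonzero, so no degenerate one-element profiles arise).

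The main obstacle is the reduction step: one must verify that every time $V$ has an ``extra'' dependence beyond the current positive one, we can subtract a scalar multiple of that dependence without driving any $W_v$ negative, and that the terminal $V^\star$ is really minimally dependent. The sign-normalization trick ($-\gamma$ works if $\gamma$ doesn't) gives the first point, and for the second, if some proper subset $V''\subsetneq V^\star$ were linearly dependent, extending that dependence by zeros to $V^\star$ would give a dependence that cannot be a scalar multiple of the strictly positive one, contradicting termination. Everything else---construction of $w$, the bin-preservation identity, and verification of applicability---follows directly from the definitions and the perfectly-filled hypothesis.
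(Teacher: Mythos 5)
Your proposal is correct and takes essentially the same approach as the paper: both proofs first build a nonnegative linear combination of excess-to-deficit moves that carries $\vec{x}$ to $\vec{y}$ (you invoke a transportation construction where the paper runs an explicit greedy loop, but these produce the same kind of object), and then both iteratively cancel spurious linear dependences from the profile set, each cancellation zeroing at least one weight while preserving nonnegativity, until the remaining profile is minimally dependent with strictly positive coefficients. The only cosmetic difference is that you aggregate weights by profile vector before pruning, whereas the paper prunes at the level of individual moves.
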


\subsection{Basic Feasible Assignments}

An item is said to be \emph{fractionally assigned} if there are two subsets
$S \neq S'$, such that $x(p,S) > 0$ and $x(p,S') > 0$.
If items can only be assigned to single bins as in the standard assignment problem,
then it follows from total unimodularity that the optimal solution is integral,
assuming that all the input values are integers. 
For  the subset assignment problem, the optimal solution may not be integral,
even if all the input values are integers. Here is an example in which
the item sizes and bin capacities are $1$, but an optimal
solution must have fractionally assigned items: we have two items $p$ and $q$ and two bins $b$ and $c$ with costs
\begin{align*}
  &\cost(p,\emptyset) = 1,\ \ \cost(p, \{b, c\}) = 0, &&\cost(q, \emptyset) = \cost(q, \{b, c\}) = C,\\
  &\cost(p, \{b\}) = \cost(p, \{c\}) = C, &&\cost(q, \{b\}) = \cost(q, \{c\}) = 0,
\end{align*}
where $C$ is a large number. The optimal assignment is to equally distribute $p$ over $\{b, c\}$ and $\emptyset$, and to equally distribute $q$ over $\{b\}$ and $\{c\}$.

The linear programming formulation of the subset assignment problem has $n+d$
constraints. $n$ constraints enforce that each $p$ must be assigned:
\[
\sum_S x(p, S) = size(p).
\]
The other $d$ constraints, say that each bin
must be exactly filled to capacity. 
Therefore, any basic feasible solution to the linear programming formulation
of the subset assignment problem has at most $n+d$ non-zero
variables.
Since for every $p$, there is at least one $S$ such that $x(p,S) > 0$
and $n >> d$,
we know at least $n - d$
of the items will not be fractionally assigned
because they have only one $S$ such that $x(p,S) > 0$.
The number of variables $x(p,S)$ such that $0 < x(p,S) < \size(p)$
is at most $2d$, so the number of fractionally assigned items is
at most $d$. 

The criteria for a feasible solution to be a basic feasible solution
is that once the variables are chosen that will be positive, there is exactly
one way to assign values to those variables so that all the constraints are satisfied.
Suppose we have a feasible  assignment $\vec{x}$. First place the items in bins
that are not fractionally assigned. If $\vec{x}$ is a basic feasible solution,
then there is a unique way to place the remaining items so that
the bins are filled exactly to capacity. 
We rephrase the definition of a basic
feasible solution in the language of the subset assignment problem
and prove the same facts about the new definition.

Consider a feasible assignment $\vec{x}$.
Let $P_{\text{frac}}$ be the set of data items that are fractionally assigned.
Let $X_{\text{frac}}$ be the set of variables $\x{p}{S}$ such that $0 < \x{p}{S} < \size(p)$.
Let $P_{\text{int}}$ be the set of items that are assigned to exactly one subset.
That is $p \in P_{\text{int}}$ if $\x{p}{S} \in \{0, \size(p)\}$ for all $S$.

\begin{definition}
For each $p \in P_{\text{frac}}$ select one $S$ such that $\x{p}{S} > 0$. Denote the selected set
for $p$ by $S_p$. Let $X$ be the set of variables $\x{p}{S}$ such that $S \neq S_p$ and
$0 < \x{p}{S} < \size(p)$. 
Let $V$ be the set of vectors $\vec{S} - \vec{S}_p$ for each $\x{p}{S} \in X$.
Then $\vec{x}$ is a \emph{basic feasible assignment} (\bfa) if and only if $V$ is linearly independent.
\end{definition}

Although the definition for a basic feasible assignment was given in terms of
a particular choice of $S_p$'s, the property of being a \bfa does not depend on this choice.

\begin{lemma}
The condition of being a \bfa does not depend on the choice of $S_p$, for 
$p \in P_{\text{frac}}$.
\end{lemma}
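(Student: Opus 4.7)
The plan is to reduce the lemma to the following local claim: if two selections $\{S_p : p \in P_{\text{frac}}\}$ agree on every fractional item except one item $p^\ast$, then the associated set $V$ is linearly independent for the first selection if and only if it is linearly independent for the second. Any two selections can be connected by a sequence of such one-item swaps, so the lemma follows from the local claim by a straightforward induction on the number of items at which the two selections disagree.

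For the local claim, fix $p^\ast$ and let $A$ and $B$ be the two competing choices for $S_{p^\ast}$; both satisfy $\x{p^\ast}{A}, \x{p^\ast}{B} > 0$. Enumerate the remaining subsets with $0 < \x{p^\ast}{S} < \size(p^\ast)$ as $S_1, \ldots, S_k$. Split each of $V^{(1)}$ and $V^{(2)}$ into a block coming from the item $p^\ast$ and a block $V_{\text{rest}}$ coming from all other fractional items, where $V_{\text{rest}}$ is identical in both selections. The first selection's block is $\{\vec{B} - \vec{A},\ \vec{S}_1 - \vec{A}, \ldots,\ \vec{S}_k - \vec{A}\}$, and the second's is $\{\vec{A} - \vec{B},\ \vec{S}_1 - \vec{B}, \ldots,\ \vec{S}_k - \vec{B}\}$. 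The identities
\[
  \vec{A} - \vec{B} = -(\vec{B} - \vec{A}), \qquad \vec{S}_i - \vec{B} = (\vec{S}_i - \vec{A}) - (\vec{B} - \vec{A})
\]
exhibit the second block as the image of the first under an invertible linear change of basis (upper-triangular with nonzero diagonal entries). Thus $V^{(1)}$ and $V^{(2)}$ have the same span and the same cardinality, and since linear independence is equivalent to the cardinality equalling the dimension of the span, one holds if and only if the other does.

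The only step that really requires care is verifying that this change of basis induces a bijection on linear dependence relations: concretely, a relation $a_0(\vec{B}-\vec{A}) + \sum_{i=1}^k a_i(\vec{S}_i - \vec{A}) + (\text{terms from } V_{\text{rest}}) = \vec{0}$ for the first selection corresponds to the relation obtained by replacing $a_0$ with $-a_0 - \sum_i a_i$ while keeping all remaining coefficients fixed. This coefficient substitution is linear and involutive up to sign, hence a bijection that maps zero to zero; it therefore exchanges nontrivial dependencies between the two versions of $V$. I do not expect a conceptual obstacle here; the main care is in the index-tracking for the $p^\ast$-block and in confirming that the degenerate case $k = 0$ (where $p^\ast$'s support is exactly $\{A, B\}$) reduces to the obvious scalar identity.
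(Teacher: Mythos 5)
Your proof is correct and rests on the same core observation as the paper's: changing the choice of $S_p$ for an item replaces its block of vectors by an invertible (upper-triangular) recombination of that block, so the span and cardinality of $V$ are unchanged and linear independence is preserved. The paper states this more tersely (directly noting that both versions of the block span the same space as all pairwise differences $\vec{S}_j - \vec{S}_k$), whereas you add a one-swap-at-a-time induction that is not strictly necessary but makes the bookkeeping explicit; the mathematical content is the same.
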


\begin{proof}
Let $p \in P_{\text{frac}}$ and let $\{S_1, \ldots, S_r\}$ be the subsets such that
$x(p,S_j) > 0$.
Suppose that $S_p$ is chosen to be $S_i$. 
Select any two $S_j \neq S_k$. Since $(\vec{S}_j - \vec{S}_k) = (\vec{S}_j - \vec{S}_p) - (\vec{S}_k - \vec{S}_p),$
the space spanned by all $(\vec{S}_j - \vec{S}_k)$ for $S_j \neq S_k$ is equal
to the space spanned by all $(\vec{S}_j - \vec{S}_p)$ for $S_j \neq S_p$.
The space spanned by all $(\vec{S}_j - \vec{S}_k) \neq \vec{0}$
is independent of the choice of $S_p$.
\end{proof}

\begin{lemma}
\label{lem:bfaFracBound}
If $\vec{x}$ is a \bfa, then the number of variables in $X_{\text{frac}}$ is at most $2d$ and the number of fractionally assigned items is at most $d$.
\end{lemma}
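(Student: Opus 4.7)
The plan is to pass everything through the quantities $|X|$, $|V|$, $|X_{\text{frac}}|$, and $|P_{\text{frac}}|$, exploiting the fact that the vectors in $V$ live in $\mathbb{R}^d$.

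First I would bound $|X| \le d$. The vectors in $V$ lie in $\mathbb{R}^d$, so any linearly independent collection has at most $d$ elements; thus $|V| \le d$. To pass from this to $|X|$ I need the map $x(p,S) \mapsto \vec{S} - \vec{S}_p$ from $X$ to $V$ to be injective. Within a single $p \in P_{\text{frac}}$ the subsets $S \ne S_p$ are distinct, so the vectors $\vec{S} - \vec{S}_p$ are distinct; across distinct $p,p'$ a coincidence $\vec{S} - \vec{S}_p = \vec{S}' - \vec{S}_{p'}$ would produce a repeated vector, which would make the indexed family (equivalently, the corresponding columns of the LP constraint matrix) linearly dependent, contradicting the bfa condition. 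Hence $|X| = |V| \le d$.

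Next I would unpack $X_{\text{frac}}$. For any $p \in P_{\text{int}}$ all variables $x(p,S)$ take values in $\{0, \size(p)\}$, so such $p$ contribute nothing to $X_{\text{frac}}$. For $p \in P_{\text{frac}}$, there are at least two subsets $S$ with $x(p,S) > 0$; since all positive values for $p$ must sum to $\size(p)$, each such $x(p,S)$ lies strictly between $0$ and $\size(p)$ and thus belongs to $X_{\text{frac}}$. Therefore
\[
X_{\text{frac}} = \bigsqcup_{p \in P_{\text{frac}}} \{ x(p,S) : x(p,S) > 0\},
\]
and partitioning each block into the distinguished $x(p,S_p)$ versus the remaining elements (which live in $X$) gives $|X_{\text{frac}}| = |X| + |P_{\text{frac}}|$.

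Finally I would combine. Since every $p \in P_{\text{frac}}$ has at least one positive $x(p,S)$ with $S \ne S_p$, the map $p \mapsto$ (any such variable) is an injection of $P_{\text{frac}}$ into $X$, so $|P_{\text{frac}}| \le |X| \le d$, which yields the second assertion. Plugging into the identity gives $|X_{\text{frac}}| \le 2|X| \le 2d$, which is the first. The only delicate point is the injectivity argument in the first step, where one must justify that repeated vectors would violate linear independence of $V$ in the sense required by the bfa definition; this is essentially the observation that the definition is the usual LP criterion phrased in terms of the reduced bin-row columns obtained by eliminating the item constraints.
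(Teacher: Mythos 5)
Your proof is correct and takes essentially the same route as the paper: bound $|X| \le d$ via linear independence of $V \subset \mathbb{R}^d$, then observe that $|X_{\text{frac}}| = |X| + |P_{\text{frac}}|$ and that each $p \in P_{\text{frac}}$ contributes at least one member to $X$, so $|P_{\text{frac}}| \le |X|$. You are in fact slightly more careful than the paper on the step $|X| = |V|$, which the paper asserts without comment; your remark that a repeated vector across distinct items would already violate the \bfa condition (once $V$ is read as an indexed family rather than a bare set) is the right way to close that small gap.
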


\begin{proof}
Since $|X| = |V|$, and $V$ must be linearly independent for any \bfa,
it must be that if $\vec{x}$ is a \bfa, then $|X| \leq d$. 
The set of fractionally assigned variables
($X_{\text{frac}}$)
includes all the $\x{p}{S_p}$ for $p \in P_{\text{frac}}$ and $X$.
For each $\x{p}{S_p}$, there is at least one variable in $X$.
Therefore the 
number of variables such that $0 < \x{p}{S} < \size(p)$ in any \bfa
is at most $2d$. 
\end{proof}

The process
\func{Restore}, given in the Appendix, takes 
an assignment $\vec{x}$ which may not be
a \bfa and restores  it  to an assignment which is a \bfa.
The process maintains the condition that the current assignment is
feasible and perfectly filled.
If the set $V$ is linearly dependent, a linear combination of the 
moves $(p, S_p, S)$ is chosen for each $\x{p}{S} \in X$
such that applying the linear combination of moves keeps the bins
perfectly filled. 
Since $p$ has some weight on $S_p$ and some weight on $S$,
 all the moves can be applied in either the forward or reverse direction.
(A negative coefficient denotes applying a move in the reverse direction.)
We choose a direction for the linear combination of moves such that the cost does not increase.
The combination of moves is applied 
until either $\x{p}{S}$ or $\x{p}{S_p}$ becomes $0$ for one
of the moves represented in $V$.
Thus, the cost of the assignment does not increase and the number of fractionally assigned
variables decreases by at least one.
The process continues until $V$ is linearly independent.

\begin{lemma}
There is an optimal solution that is also a \bfa.
\end{lemma}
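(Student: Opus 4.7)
The plan is to start with any optimal feasible, perfectly filled solution $\vec{x}^*$ and apply the \func{Restore} procedure described in the paragraph preceding the lemma. The procedure is already designed to convert any feasible, perfectly filled assignment into a \bfa while preserving feasibility and perfect filling, so the only thing left to verify is that it never increases the cost. Once that is established, running \func{Restore} on $\vec{x}^*$ produces a \bfa whose cost is at most the optimum, and by optimality of $\vec{x}^*$ it must equal the optimum.

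For a single iteration: if the current assignment is not a \bfa, then for any choice of $S_p$'s the set $V = \{\vec{S} - \vec{S}_p : \x{p}{S} \in X\}$ is linearly dependent. Fix a nontrivial relation $\sum_i \beta_i (\vec{S}_i - \vec{S}_{p_i}) = \vec{0}$. These coefficients define a combination of moves $(p_i, S_{p_i}, S_i)$ whose profile sums to $\vec{0}$, so scaling the combination by any $t \in \mathbb{R}$ preserves the load in every bin (hence perfect filling). The net cost change under magnitude $t$ is
\[
t \sum_i \beta_i \bigl[\cost(p_i, S_i) - \cost(p_i, S_{p_i})\bigr],
\]
which is linear in $t$; therefore one of $t > 0$ or $t < 0$ yields a non-positive change. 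Take that sign and grow $|t|$ until one of the strictly positive variables $\x{p_i}{S_i}$ or $\x{p_i}{S_{p_i}}$ first hits $0$. This maximum $|t|$ is finite and strictly positive because all the involved variables are positive initially.

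The resulting assignment is feasible, perfectly filled, has cost no greater than that of $\vec{x}^*$, and has strictly fewer nonzero fractional variables. Iterating at most $2d$ times, \func{Restore} terminates with a feasible, perfectly filled, optimal assignment on which the corresponding $V$ is linearly independent — a \bfa. The main technical point is the sign choice for $t$; all other properties (preservation of bin loads, strict decrease in the number of nonzero fractional variables, finite termination) follow directly from the construction, so I do not expect a genuine obstacle.
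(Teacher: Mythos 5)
Your approach matches the paper's: start with an optimal solution, apply \func{Restore}, and observe that the result is a \bfa whose cost has not increased. The bulk of your argument re-derives the cost-nonincreasing and termination properties of \func{Restore} rather than citing them, but the substance is the same. One small slip: the claim of ``at most $2d$'' iterations is wrong in this setting. That bound comes from Lemma~\ref{lem:bfaFracBound}, which limits the number of fractionally assigned variables in a \bfa, whereas here the starting point is an arbitrary optimal assignment, which may have as many as $O(n\,2^d)$ fractionally assigned variables. What you actually need --- and what your strict-decrease argument already provides --- is merely that \func{Restore} terminates after finitely many iterations, so this slip does not affect the validity of the conclusion.
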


\begin{proof}
Start with an optimal assignment  $\vec{x}$ which may not be a \bfa.
Apply \func{Restore} to $\vec{x}$. The resulting assignment is a \bfa.
And since the cost of $\vec{x}$ does not increase, $\vec{x}$ is
still optimal.
\end{proof}

\section{The Algorithm}

The algorithm we present proceeds in a series of iterations. In each iteration, we apply an augmentation
 to the current assignment. 
Since the resulting assignment may no longer be a \bfa,  we then apply 
\func{Restore} to turn the solution back into a \bfa.

\begin{algorithm}
\begin{pseudocode}
$\x{p}{S} = 0$, for all $p$ and $S$.\\
$\x{p_i}{\{b_i\}}\!=\! \capacity(b_i)$, for $i\!=\! 1, ..., d$.\! (Fill\! each\! bin\! with\! the\! ``extra''\! items.)\\
$\x{p_j}{\emptyset} = \size(p_j)$, for $j > d$. (All\! the\! ``original''\! items\! start\! outside\! the\! bins.)\\
$\mathcal{P} = \func{Preprocess}(d)$\\
$\mathcal{A} = \func{FindAugmentation}(\vec{x})$\\
\WHILE $\mathcal{A} \neq \emptyset$\\
\> Apply $\mathcal{A}$ to $\vec{x}$ with the largest possible magnitude\\
\> $\func{Restore}(\vec{x})$. (Transform $\vec{x}$ into a \bfa.)\\
\> $\mathcal{A} = \func{FindAugmentation}(\vec{x})$
\end{pseudocode}
\caption{\func{MainLoop}}
\label{alg:main}
\end{algorithm}

Note that it is possible to find an augmentation that moves from a \bfa to another \bfa directly.
This is essentially what the simplex algorithm does.
However, not every augmentation results in a \bfa.
The augmentations that do result in a \bfa
must include the moves that shift mass  between the fractionally assigned items.
(These moves correspond to the vectors $V$ described in the definition of a \bfa). 
Restricting the augmentation in this way may result in a sub-optimal augmentation.
For example, those augmentations could require decreasing a variable that is 
already very small in which case the augmentation can not be applied with very large magnitude.
So we allow the algorithm to select from the set of all augmentations
to get as much benefit as possible, and then
move the assignment to a \bfa. 


\subsection{Finding an augmentation that is close to the best possible}

The first step is a preprocessing step in which every possible augmentation profile
is generated. This consists of generating every minimally dependent subset $V$ of
$\mathcal{V}$ and its associated $\vec{\alpha}$.
$\func{Preprocess}$ (shown in the Appendix)
runs in time $O(\binom{3^d}{d+1}\poly(d))$. The number of distinct augmentation profiles returned by the procedure is at most $\binom{3^d}{d+1}$.

Given an augmentation profile $V = \{ \vec{v}_1, \ldots, \vec{v}_r\}$,
the goal is to find an augmentation whose profile matches $V$
and can be applied with a magnitude that gives close to the best possible improvement.
For each vector $\vec{v} \in \mathcal{V}$, we maintain a data structure with every move
$(p, S, T)$ such that $\vec{T} - \vec{S} = \vec{v}$ and
$\x{p}{S} > 0$. We will call the set of all such moves
$\var{Moves}(\vec{v})$.
The data structure should be able to answer queries of the form: given $x_0$,
find the move $(p, S, T)$ such that $\cost(p,T) - \cost(p,S)$ is minimized
subject to the condition that $\x{p}{S} \geq x_0$.
These kind of queries can be handled by an augmented binary search tree in
logarithmic time~\cite{CLRS-IntrAlgo-2009}.

For a given \bfa $\vec{x}$ and augmentation $\mathcal{A}$, one
can calculate the maximum possible magnitude $a$ with which
$\mathcal{A}$ can be applied to $\vec{x}$. We will make use
of upper and lower bounds for the value $a$
for any augmentation and \bfa combination.
Call these values $a_{\max}$ and $a_{\min}$.
Round $a_{\min}$ down so that $a_{\max}/a_{\min}$ is a power of $2$.
The while loop in procedure \func{FindAugmentation} runs for
$\log(a_{\max}/a_{\min})$ iterations. 

\begin{algorithm}
\begin{pseudocode}
$\var{BestCost} = 0$\\
$\mathcal{A} = \emptyset$\\
\FOR each augmentation profile $V = \{ \vec{v}_1, \ldots, \vec{v}_r \}$ and vector $\vec{\alpha}$\\
\> $a = a_{\max}/2$\\
\> \WHILE $a \geq a_{\min}$\\
\>\> \FOR $i = 1, \ldots, r$\\
\>\>\> Let $(p_i, S_i, T_i)$ be the move with the smallest cost among \\
\>\>\>\> moves in $\var{Moves}(\vec{v}_i)$ such that $\x{p_i}{S_i} \geq a \cdot \alpha_i$. \\
\>\>\> $c_i = \cost(p_i, T_i) - \cost(p_i, S_i)$\\
\>\> $\var{CurrentCost} = \sum_{i=1}^r a \cdot c_i \cdot \alpha_i$\\
\>\> \IF $\var{CurrentCost} < \var{BestCost}$\\
\>\>\> $\var{BestCost} = \var{CurrentCost}$\\
\>\>\> $\mathcal{A} = \{ (p_1, S_1, T_1), \ldots, (p_r, S_r, T_r)\}$\\
\>\> $a = a/2$\\
\RETURN $\mathcal{A}$
\end{pseudocode}
\caption{\func{FindAugmentation}($\vec{x}$)}
\label{alg:find-aug}
\end{algorithm}

For an augmentation $\mathcal{A}$ that can be applied to assignment $\vec{x}$ with
magnitude $a$, the total change in cost is
denoted by $\cost(\mathcal{A}, \vec{x}, a)$.
Recall that since we are minimizing cost we will only apply an augmentation
if the total change in cost is less than $0$.

\begin{lemma}
\label{lem:findAug}
  Let $\mathcal{A}_1$ be the augmentation returned by $\func{FindAugmentation}(\vec{x})$ and $\mathcal{A}_2$ be another augmentation. If $a_1$ and $a_2$ are the maximum magnitudes with which $\mathcal{A}_1$ and $\mathcal{A}_2$ can be applied to $\vec{x}$, then $2d \cdot \cost(\mathcal{A}_1, \vec{x}, a_1 ) \leq\cost(\mathcal{A}_2 , \vec{x}, a_2 )$.
\end{lemma}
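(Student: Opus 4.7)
The plan is to show that $\cost(\mathcal{A}_1, \vec{x}, a_1) \leq \cost(\mathcal{A}_2, \vec{x}, a_2)/(2d)$ (both sides being non-positive), from which the claim follows by multiplying by $2d$. I would split the proof into two ingredients: first, bound the value $\var{BestCost}$ recorded by the algorithm relative to $\cost(\mathcal{A}_2, \vec{x}, a_2)$; second, bound the actual cost $\cost(\mathcal{A}_1, \vec{x}, a_1)$ relative to $\var{BestCost}$, which is necessary because $\mathcal{A}_1$ is later applied at its true maximum magnitude $a_1$, which may exceed the recorded magnitude $\hat{a}_1$ at which $\var{BestCost}$ was computed.

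For the first ingredient, I would observe that the outer \textbf{for} loop considers every profile, in particular the profile $V_2$ of $\mathcal{A}_2$ with its coefficients $\vec{\alpha}^{(2)}$. Since $a_2 \in [a_{\min}, a_{\max}]$ and the inner loop enumerates magnitudes $a_{\max}/2^i$ down to $a_{\min}$, there is some $\hat{a}_2$ in the sequence with $a_2/2 \leq \hat{a}_2 \leq a_2$. At $(V_2, \hat{a}_2)$, the moves of $\mathcal{A}_2$ themselves are valid candidates because $\x{p_i^{(2)}}{S_i^{(2)}} \geq a_2\, \alpha_i^{(2)} \geq \hat{a}_2\, \alpha_i^{(2)}$, so the per-vector cheapest choice yields $\var{CurrentCost} \leq \hat{a}_2 \sum_i c_i^{(2)} \alpha_i^{(2)}$. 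Combined with $\hat{a}_2 \geq a_2/2$ and negativity of the sum, this gives $\var{BestCost} \leq \cost(\mathcal{A}_2, \vec{x}, a_2)/2$.

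For the second ingredient, the algorithm only enforces $\x{p_i^{(1)}}{S_i^{(1)}} \geq \hat{a}_1\, \alpha_i^{(1)}$ for each individual move, whereas the true maximum magnitude is $a_1 = \min_{(p,S)} \x{p}{S}/\alpha(p,S)$, with $\alpha(p,S) = \sum_{i : (p_i,S_i) = (p,S)} \alpha_i^{(1)}$. If $k$ moves of $\mathcal{A}_1$ share a single pair $(p,S)$, then $\x{p}{S} \geq \hat{a}_1 \max_i \alpha_i^{(1)} \geq \hat{a}_1\, \alpha(p,S)/k$, so $a_1 \geq \hat{a}_1/k$. The crucial claim I would prove is that $k \leq d$: if all $r$ moves shared a single $(p, S_0)$, the identity $\sum_i \alpha_i^{(1)}(\vec{T}_i - \vec{S}_0) = \vec{0}$ with $\alpha_i^{(1)} > 0$ would force $\vec{T}_i = \vec{S}_0$ coordinate by coordinate (coordinates with $(S_0)_j = 0$ force $(T_i)_j = 0$; coordinates with $(S_0)_j = 1$ force $(T_i)_j = 1$, both by positivity together with the $\{0,1\}$ structure of the $\vec{T}_i$), making every profile vector $\vec{0}$ and contradicting minimal dependence of nonzero vectors. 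Hence $k \leq r-1 \leq d$, yielding $a_1 \geq \hat{a}_1/d$ and therefore $\cost(\mathcal{A}_1, \vec{x}, a_1) \leq \var{BestCost}/d$.

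Chaining the two bounds gives $\cost(\mathcal{A}_1, \vec{x}, a_1) \leq \cost(\mathcal{A}_2, \vec{x}, a_2)/(2d)$, which on multiplication by $2d$ is the lemma. The main obstacle is the positivity argument underlying $k \leq d$; without it, the naive bound $k \leq r \leq d+1$ from the size of a minimally dependent set would only yield the weaker factor $2(d+1)$, so the coordinate-wise argument exploiting strict positivity of $\vec{\alpha}$ together with the $\{0,1\}$ structure of $\vec{T}_i$ and $\vec{S}_0$ is what pins down the exact constant $2d$ claimed by the lemma.
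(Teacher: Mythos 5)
Your argument matches the paper's two-stage structure: first compare $\var{BestCost}$ to $\cost(\mathcal{A}_2, \vec{x}, a_2)$ via the iteration at the dyadic magnitude $\hat a_2 \in [a_2/2, a_2]$, then relate the realized $\cost(\mathcal{A}_1, \vec{x}, a_1)$ to the recorded $\var{BestCost}$ via $a_1 \geq \hat a_1/d$. The one place you are more careful is in justifying the factor $d$ rather than $d+1$: the paper simply writes $\beta_{p,S}^{\text{sum}} \leq d\cdot\beta_{p,S}^{\max}$, tacitly assuming at most $d$ of the (up to $d+1$) moves of $\mathcal{A}_1$ share a source pair $(p,S)$, whereas your coordinate-wise positivity argument — that $\sum_i \alpha_i(\vec{T}_i - \vec{S}_0) = \vec{0}$ with $\vec{\alpha} > 0$ and $\{0,1\}$-valued $\vec{T}_i, \vec{S}_0$ forces every $\vec{T}_i = \vec{S}_0$ — explicitly rules out the degenerate case of all moves sharing a source, thereby pinning down the constant $2d$ that the lemma claims.
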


\begin{proof}
Let $V_2$ be the profile for $\mathcal{A}_2$. Let $\vec{\alpha}$ be the vector associated with
the profile $V_2$. 
Let ${\bar a}$ be the value of the
form $a_{\max}/2^j$ such that $2 {\bar a} > a_2 \geq {\bar a}$.
There is an iteration inside the while loop of $\func{FindAugmentation}(\vec{x})$
in which the augmentation profile is $V_2$ and the value for $a$ is ${\bar a}$.
The augmentation constructed in this iteration will be called $V_3$.
The moves in $\mathcal{A}_2$ and in $\mathcal{A}_3$ are $\{(p_1^{(2)}\!, S_1^{(2)}, T_1^{(2)}),... , (p_r^{(2)}\!, S_r^{(2)}, T_r^{(2)})\}$ and $\{(p_1^{(3)}\!, S_1^{(3)}, T_1^{(3)}),..., (p_r^{(3)}\!, S_r^{(3)}, T_r^{(3)})\}$, respectively.

Note that since $V_2$ can be applied to $\vec{x}$ with magnitude $a_2$, it must be the case that
for $i = 1, \ldots, r$,
$\x{p_i^{(2)}}{S_i^{(2)}} \geq \alpha_i a_2$ because applying the moves involves removing
$\alpha_i a_2$ from $\x{p_i^{(2)}}{S_i^{(2)}}$.
Since $a_2 \geq {\bar a}$,
$\x{p_i^{(2)}}{S_i^{(2)}} \geq \alpha_i {\bar a}$.
The move $(p_i^{(3)}, S_i^{(3)}, T_i^{(3)})$ is chosen to be the move with minimum cost
such that $\x{p_i^{(3)}}{S_i^{(3)}} \geq \alpha_i {\bar a}$.
Therefore
the cost of $(p_i^{(3)}, S_i^{(3)}, T_i^{(3)})$ is at most the cost of
$(p_i^{(2)}, S_i^{(2)}, T_i^{(2)})$.
The value of the variable $\var{CurrentCost}$ for that iteration is 
\begin{eqnarray*}
\var{CurrentCost}_3 
& = & \bar{a} \sum_{i = 1}^r \alpha_i \left[\cost(p_i^{(3)}, T_i^{(3)}) - \cost(p_i^{(2)}, S_i^{(3)})\right]\\
& \leq & \bar{a} \sum_{i = 1}^r \alpha_i \left[\cost(p_i^{(2)}, T_i^{(2)}) - \cost(p_i^{(2)}, S_i^{(2)})\right]\\
& \leq & \frac{a_2}{2} \sum_{i = 1}^r \alpha_i \left[\cost(p_i^{(2)}, T_i^{(2)} - \cost(p_i^{(2)}, S_i^{(2)}) \right]\\
& =& \frac{1}{2}\cost(\mathcal{A}_2 , \vec{x}, a_2 )
\end{eqnarray*}

Let $\var{CurrentCost}_1$ be the value of the variable
$\var{CurrentCost}$ and $a'$ the value of the variable $a$
during the iteration in which the augmentation $\mathcal{A}_1$ is considered. 
Since $\mathcal{A}_1$ was selected by
$\func{FindAugmentation}$, $\var{CurrentCost}_1 \leq \var{CurrentCost}_3$.
It remains to show that the maximum magnitude with which $\mathcal{A}_1$ can be applied
is at least $a'/d$ and therefore the actual change in cost at most
$\var{CurrentCost}_1/d$. 

Let $V_1$ be the profile for $\mathcal{A}_1$ and let $\vec{\beta}$ be the vector associated with
profile $V_1$.
Since we are now only referring to one augmentation, we omit the subscripts and call the moves
in $\mathcal{A} = \{(p_1, S_1, T_1), \ldots, (p_r, S_r, T_r)\}$.
We are guaranteed by the selection of the move $(p_i, S_i, T_i)$ that for every $i$,
$\x{p_i}{S_i}/\beta_i \geq a'$.
Let $\beta_{p, S}^{\text{sum}}$ and $\beta_{p, S}^{\max}$ denote the sum and maximum over all $\beta_i$ such that $p_i = p$ and $S_i = S$. 
The value of $a_1$, the maximum value with which $\mathcal{A}$ can be applied,
is equal to $\x{p}{S}/\beta_{p,S}$ for some pair $(p, S)$. We have
\[
a_1 = \frac{\x{p}{S}}{\beta_{p,S}^{\text{sum}}} \geq \frac{\x{p}{S}}{d \cdot \beta_{p,S}^{\max}} 
\geq \frac{a'}{d}.\qedhere
\]
\end{proof}

\subsection{Number of iterations of the main loop}

The procedure \func{FindAugmentation} takes a \bfa $\vec{x}$ and 
returns an augmentation that reduces the cost of the current solution by an amount which
is within $\Omega(1/d)$ of the best possible augmentation that can be applied to $\vec{x}$.
In order to bound  the number iterations of the main loop, we need to show that there always is
a good augmentation that can be applied to  $\vec{x}$ that moves it towards an optimal solution. 
The idea is that for any two assignments $\vec{x}$ and $\vec{y}$, $\vec{x}$ can   be
transformed into $\vec{y}$ by applying a sequence of augmentations.
Each augmentation decreases the number of variables in which $\vec{x}$ and $\vec{y}$
differ by one. Since the number of non-zero variables in any \bfa is at most $n+d$,
there are at most $2(n+d)$ augmentations in the sequence. Thus, if the difference in cost between $\vec{y}$ and $\vec{x}$
is $\Delta$, one of the augmentations will decrease the cost by at least $\Delta/2(n+d)$.
The idea is analogous to the partitioning the difference between two min cost flows into a set of
disjoint cycles. Some additional work is required to establish that the chosen augmentation can 
be applied with sufficient magnitude.

The proofs of Lemmas~\ref{lem:bigAug},~\ref{lem:bound-entries},~\ref{lem:bound-bfa} and \ref{lem:mainLoop} are given in the Appendix.

\begin{lemma}
\label{lem:bigAug}
Let $\vec{x}$ be a \bfa for an instance of the subset assignment problem
and let $\Delta$ be the difference in the objective function between $\vec{x}$ and the 
optimal solution. Then there is an augmentation $\mathcal{A}$ such that
when $\mathcal{A}$ is applied to $\vec{x}$ with the maximum possible
magnitude, the cost drops by at least
$\Delta/2(n+d)$.
\end{lemma}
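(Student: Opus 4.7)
The plan is to mimic the standard min-cost flow cycle-decomposition argument in the setting of augmentations. Let $\vec{y}$ be an optimal \bfa (which exists by the earlier lemma). I will build an explicit sequence $\vec{x} = \vec{x}_0, \vec{x}_1, \ldots, \vec{x}_k = \vec{y}$ where each $\vec{x}_i$ is obtained from $\vec{x}_{i-1}$ by applying a single augmentation $\mathcal{A}_i$ produced by Lemma~\ref{lem:vecDiff} on the pair $(\vec{x}_{i-1}, \vec{y})$, so that every move $(p, S, T)$ in $\mathcal{A}_i$ satisfies $x_{i-1}(p, S) > y(p, S)$ and $x_{i-1}(p, T) < y(p, T)$. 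I choose the magnitude $a_i$ to be the largest value that does not cause any source coordinate $x(p, S_j)$ to drop below $y(p, S_j)$ nor any target coordinate $x(p, T_j)$ to rise above $y(p, T_j)$; this guarantees that at least one coordinate becomes equal to its value in $\vec{y}$ at each step.

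Next I bound $k$. Since $\vec{x}$ and $\vec{y}$ are both \bfa's, each has at most $n+d$ non-zero variables, so the number of coordinates on which the two assignments disagree is at most $2(n+d)$. By construction every iteration strictly decreases this count, hence $k \leq 2(n+d)$. The per-iteration cost changes telescope to $\cost(\vec{y})-\cost(\vec{x}) = -\Delta$, so by averaging there must be some iteration $i$ whose cost change is at most $-\Delta/2(n+d)$.

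The main obstacle is then to argue that applying the \emph{same} $\mathcal{A}_i$ to the original assignment $\vec{x}$ at its maximum admissible magnitude achieves at least this much cost reduction. Since the per-unit cost $\vec{c}_i\cdot\vec{\alpha}_i$ of $\mathcal{A}_i$ is negative, I need only show that the maximum magnitude $a^\ast$ with which $\mathcal{A}_i$ can be applied to $\vec{x}$ is at least $a_i$. The admissible magnitude is controlled solely by source pairs, so it suffices to show $x(p_j, S_j) \geq x_{i-1}(p_j, S_j)$ for every source pair of $\mathcal{A}_i$. The decisive observation is a monotonicity property of the decomposition: because we never overshoot $\vec{y}$, the sign of $x_\ell(p, S) - y(p, S)$ can only move toward zero as $\ell$ grows; in particular, a coordinate with $x_\ell(p, S) > y(p, S)$ is only ever a source (never a target) in subsequent augmentations, and its value weakly decreases. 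Thus every source pair of $\mathcal{A}_i$ has value at least as large in $\vec{x}$ as in $\vec{x}_{i-1}$, so $a^\ast \geq a_i$, and the total cost drop when $\mathcal{A}_i$ is applied to $\vec{x}$ with magnitude $a^\ast$ is at least $|a_i(\vec{c}_i\cdot\vec{\alpha}_i)| \geq \Delta/2(n+d)$, as required.
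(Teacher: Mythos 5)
Your proposal is correct and follows essentially the same approach as the paper's proof. Both proofs decompose the path from $\vec{x}$ to an optimal \bfa $\vec{y}$ into a sequence of augmentations produced by Lemma~\ref{lem:vecDiff}, with magnitudes chosen so as never to overshoot $\vec{y}$; both bound the length of the sequence by $2(n+d)$ using the support-size bound for \bfa's; both use an averaging argument to isolate an augmentation with cost drop at least $\Delta/2(n+d)$; and both establish that this augmentation is applicable to the original $\vec{x}$ with sufficient magnitude via the same monotonicity observation, namely that source coordinates (where $x_\ell > y$) remain sources throughout and only ever lose mass, so $\vec{x}$ dominates every intermediate $\vec{x}_{i-1}$ on the source pairs of $\mathcal{A}_i$.
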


In order to bound the number of iterations in the main loop, we need to know
the smallest difference in cost between two assignments that have different cost.

\begin{lemma}\label{lem:bound-entries}
If $A$ is an invertible $d \times d$ matrix with entries in $\{-1, 0, 1\}$ and $\vec{b}$ is a $d$-vector with integer entries, then 
there is an integer $\ell \leq d^{d/2}$
such that the solution $\vec{x}$ to $A\vec{x} = \vec{b}$ has entries of the form $k/\ell$ where $k$ is an integer. Moreover, if $\vec{b}$ also has entries in $\{-1, 0, 1\}$, then the entries of $x$ are at most equal to $d$.
\end{lemma}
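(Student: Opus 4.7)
The plan is to prove both claims by combining Cramer's rule with Hadamard's inequality, the two standard tools for controlling sizes of entries of integer matrices with $\{-1,0,1\}$ entries.

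First I would invoke Cramer's rule. Since $A$ is invertible, the unique solution to $A\vec{x}=\vec{b}$ satisfies
\[
x_i \;=\; \frac{\det(A_i)}{\det(A)},
\]
where $A_i$ is obtained from $A$ by replacing its $i$-th column with $\vec{b}$. Because $A$ has integer entries and $\det(A)\neq 0$, the quantity $\ell := |\det(A)|$ is a positive integer; because $\vec{b}$ has integer entries, each $\det(A_i)$ is also an integer. Therefore every $x_i$ can be written as $k/\ell$ with $k\in\mathbb{Z}$, which gives the structural part of the first statement.

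Next I would bound $\ell$ via Hadamard's inequality. Each column $a_j$ of $A$ lies in $\{-1,0,1\}^d$, so $\|a_j\|_2 \le \sqrt{d}$, and
\[
|\det(A)| \;\le\; \prod_{j=1}^{d}\|a_j\|_2 \;\le\; (\sqrt{d})^{d} \;=\; d^{d/2},
\]
establishing $\ell \le d^{d/2}$. For the second claim, when $\vec{b}$ also lies in $\{-1,0,1\}^d$ the matrix $A_i$ has all of its columns in $\{-1,0,1\}^d$ as well, so the same Hadamard bound gives $|\det(A_i)|\le d^{d/2}$, and combining with $\ell\ge 1$ yields $|x_i|\le d^{d/2}$. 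To sharpen this to the stated bound, I would expand $\det(A_i)$ along its $\vec{b}$-column as
\[
\det(A_i) \;=\; \sum_{j=1}^{d}(-1)^{i+j}\,b_j\,M_{ji},
\]
where each $M_{ji}$ is a $(d{-}1)\times(d{-}1)$ minor of $A$, and then relate the sum $\sum_j |M_{ji}|$ to $|\det(A)|$ using the cofactor identity $A^{-1}=\operatorname{adj}(A)/\det(A)$ together with the $\{-1,0,1\}$ row structure of $A$ to force cancellations.

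The main obstacle is precisely this last step: the naïve Cramer+Hadamard argument only delivers $|x_i|\le d^{d/2}$, and the tight bound requires a finer accounting of how the cofactors $M_{ji}$ interact with the rows of $A$. I expect the proof to lean on the observation that the rows of $A$ collectively pin down $\det(A)$ up to sign, while the $\{-1,0,1\}$ row bound on $\sum_k|A_{jk}|\le d$ controls how much any single equation can amplify $|x_i|$, yielding the linear-in-$d$ bound.
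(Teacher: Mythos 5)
Your argument for the first claim is correct and is essentially the same as the paper's: the paper invokes $A^{-1}=\operatorname{adj}(A)/\det(A)$ (which is Cramer's rule in a different guise) together with Hadamard's inequality to bound $|\det(A)|\le d^{d/2}$, exactly as you do.

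For the second claim, you correctly observe that the plain Cramer-plus-Hadamard argument only yields $|x_i|\le d^{d/2}$ and that some finer accounting of the cofactors would be needed to reach the stated bound $|x_i|\le d$. That finer step cannot be carried out, because the stated bound is in fact false. Take $A$ to be the $d\times d$ lower-triangular matrix with $1$'s on the diagonal and $-1$'s strictly below, and $\vec{b}=(1,1,\dots,1)^T$; both have $\{-1,0,1\}$ entries and $\det(A)=1$, yet back-substitution gives $x_1=1$, $x_2=2$, $x_3=4$ and in general $x_i=2^{i-1}$, so $x_d=2^{d-1}>d$ for all $d\ge 3$. (One can check that the columns of $A$ together with $-\vec{b}$ form a minimally dependent positive set, so this is not excluded by the way the lemma is used.) The paper's own proof of this part consists only of the bare assertion that ``if the entries of $\vec{b}$ are in $\{-1,0,1\}$, then $k\le d$,'' which is not justified; the numerator $k=\det(A_i)$, where $A_i$ is $A$ with column $i$ replaced by $\vec{b}$, is bounded only by $d^{d/2}$ via Hadamard, and the example above shows this cannot be improved to $d$. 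So the obstacle you ran into is not a shortcoming of your attempt: the statement itself needs to be weakened to $|x_i|\le d^{d/2}$ (which your argument does prove), and the paper's proof has a genuine error at this step. The downstream use in bounding $a_{\max}/a_{\min}$ needs only a $d^{O(d)}$ estimate on $\alpha_i$, so substituting $d^{d/2}$ for $d$ there leaves the algorithm's stated asymptotic running time unaffected.
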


The following bound comes from the fact that the fractionally assigned values are the solution
to a matrix equation with a $d \times d$ matrix over $\{-1, 0, 1\}$.

\begin{lemma}\label{lem:bound-bfa}
  If $\vec{x}$ is a \bfa, then there is an integer
$\ell \leq d^{d/2}$ such that every $\x{p}{S} = k/\ell$ for some integer $k$.
\end{lemma}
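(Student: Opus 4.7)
The plan is to exhibit the variables in $X$ (the non-designated fractionally assigned variables) as the unique solution of a $k \times k$ linear system ($k = |X| \le d$) whose coefficient matrix has entries in $\{-1,0,1\}$ and whose right-hand side is integral, and then invoke Lemma~\ref{lem:bound-entries}. Fix a choice of designated subsets $S_p$ for $p \in P_{\text{frac}}$, and let $X$ and $V$ be as in the definition of a \bfa, so that $|X| = |V| \le d$ and $V$ is linearly independent.

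First I would handle the variables outside $X$ and reduce the bound to a bound on $X$. Every $\x{p}{S}$ with $p \in P_{\text{int}}$ is either $0$ or $\size(p)$, hence an integer. For each $p \in P_{\text{frac}}$, the sum constraint (\ref{eq:condition1}) gives
\[
\x{p}{S_p} \;=\; \size(p) - \sum_{\x{p}{S}\in X} \x{p}{S},
\]
so once the variables in $X$ are known, every positive entry of $\vec{x}$ is determined as an integer linear combination of entries of $X$ and of the (integer) sizes. It therefore suffices to show that every variable in $X$ has the form $k/\ell$ for the same integer $\ell \le d^{d/2}$.

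Next I would substitute these expressions into the $d$ bin-capacity equations $\sum_p \sum_{S \ni b} \x{p}{S} = \capacity(b)$ and move the contributions from $P_{\text{int}}$ and from the designated $\x{p}{S_p}$ terms to the right-hand side. A short calculation shows that the resulting coefficient of the unknown $\x{p}{S} \in X$ in the equation for bin $b$ is exactly the $b$-th coordinate of $\vec{S} - \vec{S}_p$, while the right-hand side is an integer linear combination of (integer) sizes and capacities. Hence $\vec{u} \in \mathbb{R}^{|X|}$, the vector of variables in $X$, satisfies $M \vec{u} = \vec{b}$, where $M$ is the $d \times |X|$ matrix whose columns are the vectors of $V$ and $\vec{b} \in \mathbb{Z}^d$.

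Finally, because $V$ is linearly independent, $M$ has rank $|X|$, so I can select $|X|$ of its rows to form an invertible square submatrix $A \in \{-1,0,1\}^{|X| \times |X|}$; let $\vec{b}'$ be the corresponding subvector of $\vec{b}$. Then $\vec{u}$ is the unique solution of $A \vec{u} = \vec{b}'$, and Lemma~\ref{lem:bound-entries} yields an integer $\ell \le |X|^{|X|/2} \le d^{d/2}$ such that every entry of $\vec{u}$ has the form $k/\ell$ with $k \in \mathbb{Z}$. The computation in the second paragraph then gives the same denominator for every other positive variable of $\vec{x}$. The only step requiring any care is the extraction of an invertible square $\{-1,0,1\}$-submatrix from $M$, and this is immediate precisely because the columns of $M$ are the linearly independent set $V$.
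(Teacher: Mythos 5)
Your proof is correct and follows essentially the same route as the paper's: both derive the linear system with coefficient matrix formed from the vectors $\vec{S}-\vec{S}_p$ and integer right-hand side, then invoke Lemma~\ref{lem:bound-entries}. The only cosmetic difference is how the square system is produced — the paper extends the column set $V$ to a full $\{-1,0,1\}$-basis of $\mathbb{R}^d$ and works with the resulting $d \times d$ matrix (implicitly using that the extra basis coordinates of the solution vanish), whereas you keep the $d \times |X|$ matrix and select $|X|$ rows forming an invertible submatrix; both moves are standard and lead to the same Hadamard bound $\ell \le d^{d/2}$.
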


With this result, we can bound the number of iterations in our algorithm.

\begin{lemma}
\label{lem:mainLoop}
The number of iterations of the main loop is $O(n d^2 \log(dnC))$.
\end{lemma}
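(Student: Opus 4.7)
The plan is a potential-function argument combining Lemmas~\ref{lem:bigAug}, \ref{lem:findAug}, and \ref{lem:bound-bfa}. Let $\Delta_t$ denote the gap between $\cost(\vec{x}_t)$ and the optimal cost at the start of iteration $t$. By Lemma~\ref{lem:bigAug}, there is an augmentation $\mathcal{A}^*$ applicable to $\vec{x}_t$ whose application at its maximum magnitude reduces the cost by at least $\Delta_t/(2(n+d))$. By Lemma~\ref{lem:findAug}, the augmentation $\mathcal{A}$ returned by \func{FindAugmentation}$(\vec{x}_t)$ satisfies
\[
  2d\cdot\cost(\mathcal{A}, \vec{x}_t, a) \le \cost(\mathcal{A}^*, \vec{x}_t, a^*),
\]
so (since both cost changes are negative) applying $\mathcal{A}$ at its maximum magnitude drops the cost by at least $\Delta_t/(4d(n+d))$. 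Since the subsequent call to \func{Restore} never increases cost, I obtain $\Delta_{t+1} \le \Delta_t(1-1/(4d(n+d)))$, and by induction $\Delta_T \le \Delta_0 \exp(-T/(4d(n+d)))$.

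Next I would lower-bound the smallest positive value of $\Delta_t$. By Lemma~\ref{lem:bound-bfa}, every variable of a \bfa is of the form $k/\ell$ for some integer $\ell \le d^{d/2}$, where $\ell$ depends on the \bfa. Since all costs are integers, the total cost of any \bfa is $m/\ell$ for some integer $m$, and hence the cost difference between any two \bfa's is either zero or at least $1/(\ell_1\ell_2) \ge 1/d^d$. Because the algorithm maintains a \bfa after each iteration (by \func{Restore}) and an optimal \bfa exists, once $\Delta_T < 1/d^d$ the gap must in fact equal zero and the main loop exits.

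Combining the two steps, the algorithm terminates once $T > 4d(n+d)(\ln\Delta_0 + d\ln d)$. The initial assignment places the extra items $p_i$ in their bin $b_i$ at zero cost and the remaining items in $\emptyset$, so $\Delta_0 = O(nC)$ under the paper's cost scale. Using $n+d = O(n)$, this gives
\[
  T \;=\; O\!\bigl(nd\log(nC) + nd^2\log d\bigr) \;=\; O\!\bigl(nd^2\log(dnC)\bigr),
\]
matching the claim. The main obstacle I anticipate is the minimum-positive-gap step: the per-\bfa denominator bound of $d^{d/2}$ yields only $d^{-d}$ between two distinct \bfa's because the denominators multiply, and it is crucial that the extra $\ln d^d = d\ln d$ is absorbed by the $d^2$ factor in the final expression. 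A secondary subtlety is parsing the sign convention in Lemma~\ref{lem:findAug}: because cost changes are negative for improving augmentations, the factor-$2d$ inequality must be interpreted as saying the magnitude of the reduction achieved by \func{FindAugmentation} is at least $1/(2d)$ times that of any other applicable augmentation.
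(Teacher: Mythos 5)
Your proposal is correct and follows essentially the same route as the paper: use Lemma~\ref{lem:bigAug} and Lemma~\ref{lem:findAug} to get a multiplicative decrease of $1-1/(4d(n+d))$ per iteration, use Lemma~\ref{lem:bound-bfa} to lower-bound the smallest positive gap by $1/d^d$, and combine with $\Delta_0 \le nC$ to get $O(nd^2\log(dnC))$ iterations. Your minimum-gap step is in fact argued slightly more cleanly than the paper's (you directly observe that two \bfa costs differ by a nonzero multiple of $1/(\ell_1\ell_2)$, rather than case-splitting on $\ell = \ell'$ versus $\ell \ne \ell'$), but this is a cosmetic difference, not a different approach.
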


\subsection{Analysis of the running time}

The running time of $\func{Preprocess}$ is dominated by the running time of the main loop, so we
just analyze the running time of the main loop. 
To bound the size of the augmented binary search trees $\var{Moves}(\vec{v})$,
observe that for each $S$, there is at most one $T$ such that $\vec{T} - \vec{S} = \vec{v}$.
Therefore, the number of moves $(p, S, T)$ that can be stored in a single  tree is 
$O(2^d n)$. Updates are handled in logarithmic time, so the time per 
update to an entry in one of the trees is $O(d \log n)$.
Every time a variable $\x{p}{S}$ changes, there are $2^d$ subsets $T$ such that  the move
$(p, S, T)$ must be updated. In each iteration of the main loop there are $O(d)$ variable changes,
resulting in a total update time of $O(d^2 2^d \log n)$.

By Lemma~\ref{lem:bfaFracBound}, the  \bfa 
at the beginning of an iteration has at most $2d$ fractionally assigned variables.
An augmentation consists of at most $d+1$ moves and therefore changes the value of at most
$2(d+1)$ variables.
Thus, the input to \func{Restore} is  an assignment with $O(d)$ fractionally assigned variables.
Each iteration of \func{Restore} reduces the number of fractionally assigned variables by at least one.
Therefore, the number of iterations of \func{Restore} is  bounded by $O(d)$ and the total time spent 
in \func{Restore} during an iteration is $\poly(d)$.

The inner loop of $\func{FindAugmentation}$ requires $O(d)$ queries to one of the 
augmented binary search trees resulting in $O(d^2 \log n)$ time for each iteration of the inner loop.
The number of times the inner loop is executed is 
$\log(a_{\max}/a_{\min})$ times the number of augmentation profiles, $f(d)$. 
Therefore the running time of $\func{FindAugmentation}$ dominates
the running time of an iteration of the main loop which is
$O(f(d) d^2 \log n \log(a_{\max}/a_{\min}))$.
By Lemma~\ref{lem:mainLoop}, the number of iterations of the main loop is $O(nd^2 \log(dnC))$,
and since $f(d) \le \binom{3^d}{d+1}$,
the total running time is  
$O(\binom{3^d}{d+1} d^4 n \log n (\log n + \log C) \log(a_{\max}/a_{\min}))$.
We now 
bound $a_{\max}/a_{\min}$:

\begin{lemma}\label{lem:bound-a}
The values of $a$ are bounded above by $a_{\max} = d^{d/2} Z$
and below by $a_{\min} = 1/d^{d/2+1}$, where $Z = \max_p \size(p)$.
\end{lemma}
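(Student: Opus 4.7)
The plan is to analyze the explicit formula $a = \min_{(p,S)\in\mathcal{S}(\mathcal{A})} \x{p}{S}/\alpha(p,S)$ for the maximum magnitude with which an augmentation $\mathcal{A}$ can be applied to a \bfa $\vec{x}$, separately bounding the numerator by the maximum item size and the denominator using the structural control on $\vec{\alpha}$. The two key tools are Lemma~\ref{lem:bound-entries} for entries of solutions to $\{-1,0,1\}$-matrix systems and Lemma~\ref{lem:bound-bfa} for the granularity of positive \bfa entries.

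For the upper bound $a_{\max} = d^{d/2} Z$, I exploit that a minimum is at most any individual term, so $a \leq \x{p_j}{S_j}/\alpha(p_j,S_j)$ for any $j$. The numerator is at most $\size(p_j) \leq Z$. For the denominator, the defining system $\sum_i \alpha_i \vec{v}_i = \vec{0}$ with $\alpha_1 = 1$ reduces, after extracting a full-rank square submatrix of the $\vec{v}_i$'s, to $A\vec{\alpha}' = -\vec{v}_1$ with $A$ a $\{-1,0,1\}$-matrix. Cramer's rule writes each $\alpha_i$ as $k_i/\det A$, and Hadamard's inequality gives $|\det A| \leq d^{d/2}$, so every positive $\alpha_i \geq 1/d^{d/2}$. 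Hence $\alpha(p_j,S_j) \geq 1/d^{d/2}$, yielding $a \leq d^{d/2} Z$.

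For the lower bound $a_{\min} = 1/d^{d/2+1}$, I need a lower bound on every term in the min. Since the augmentation is applicable, $\x{p}{S} > 0$ for every $(p,S)\in\mathcal{S}(\mathcal{A})$, and Lemma~\ref{lem:bound-bfa} then gives $\x{p}{S} \geq 1/d^{d/2}$. For the denominator, Lemma~\ref{lem:bound-entries} yields $|\alpha_i| \leq d$ for every $i$, and a short counting argument over the moves that share a pair $(p,S)$ in an augmentation of size at most $d+1$ produces $\alpha(p,S) \leq d$. Combining gives $a \geq (1/d^{d/2})/d = 1/d^{d/2+1}$.

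The main obstacle is pinning down the sharp bound $\alpha(p,S) \leq d$ rather than the naive $O(d^2)$ one would get from $r-1 \leq d$ terms each at most $d$. The sharper bound should follow from the observation that the vectors $\vec{T}_i - \vec{S}$ for moves sharing $(p,S)$ are distinct elements of a minimally dependent set of at most $d+1$ vectors in $\{-1,0,1\}^d$, which restricts how large their combined coefficient can be. Any residual slack in this constant only propagates through $\log(a_{\max}/a_{\min})$ and is absorbed by the $\poly(d)$ factor in the overall running time bound, so the proof is essentially a clean combination of the two earlier structural lemmas with Hadamard's inequality.
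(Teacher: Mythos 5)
Your proposal follows the same route as the paper: bound the min-magnitude formula $a = \min_{(p,S)} \x{p}{S}/\alpha(p,S)$ by combining the granularity bound from Lemma~\ref{lem:bound-bfa} on $\x{p}{S}$ with the entry bounds from Lemma~\ref{lem:bound-entries} (i.e., Cramer's rule and Hadamard) on the coefficients $\vec{\alpha}$, plus the trivial bound $\x{p}{S}\le Z$ from (\ref{eq:condition1}). The upper-bound argument is fine exactly as you wrote it --- in fact with $\alpha_1 = 1$ you could even conclude $a\le Z$ directly.

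For the lower bound, you are actually being more careful than the paper: you note that the denominator in the max-magnitude formula is $\alpha(p,S) = \sum_{i : p_i = p, S_i = S}\alpha_i$, a \emph{sum} over possibly several moves that share a source pair, whereas the paper's proof silently substitutes ``the maximum possible value for $\alpha_i$.'' However, your proposal does not actually establish $\alpha(p,S)\le d$; you acknowledge this yourself in the last paragraph. There is no ``short counting argument'' that gives $d$: from $\alpha_i\le d$ (Lemma~\ref{lem:bound-entries}) and at most $d+1$ moves in an augmentation, the honest bound is $\alpha(p,S)\le d(d+1) = O(d^2)$, which would force $a_{\min}$ to be roughly $1/d^{d/2+2}$ rather than $1/d^{d/2+1}$. (Note that the proof of Lemma~\ref{lem:findAug} does handle the shared-pair case correctly via $\beta^{\text{sum}}_{p,S}\le d\cdot\beta^{\max}_{p,S}$, so the issue is confined to the constant in Lemma~\ref{lem:bound-a}.) You are right that this discrepancy only changes $\log(a_{\max}/a_{\min})$ by an additive $O(\log d)$ term, which is absorbed into the $\poly(d)$ in the final running time; but as written, neither your lower bound nor the paper's literally yields $a_{\min}=1/d^{d/2+1}$, and you should either prove the $\alpha(p,S)\le d$ claim or just state the weaker $a_{\min}=1/d^{d/2+2}$ and note it suffices.
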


The proof of Lemma~\ref{lem:bound-a} is given in the Appendix. Hence, we get that $\log(a_{\max}/a_{\min})$ is $O(d^2 \log d \log Z)$ and the total running time is 
\[
O\left(\binom{3^d}{d+1} \poly(d) n \log(n) \log(nC) \log(Z)\right).
\]

\bibliography{refs}

\pagebreak

\appendix

\section{The algorithm \texorpdfstring{$\func{Restore}$}{Restore}}

The algorithm $\func{Restore}$ takes a feasible, perfectly filled assignment $\vec{x}$ 
and converts it to a \bfa that is also perfectly filled.
The cost of the resulting assignment is no larger than the cost of the input assignment.
The number of iterations is bounded by the number of fractionally assigned variables in
the input assignment.

\begin{algorithm}
\begin{pseudocode}
For each $p \in P_{\text{frac}}$, select an $S$ s.t. $\x{p}{S} > 0$. Call the chosen set $S_p$.\\
Let $X$ be the set\! of\! variables\! $\x{p}{S}$ s.t. $S \neq S_p$\! and\! $0 < \x{p}{S} < \size(p)$.\\
Order the variables in $X$: $\{\x{p_1}{S_1}, \ldots, \x{p_r}{S_r}\}$\\
Let $V$ be the set of vectors $\vec{S} - \vec{S}_p$ for each $\x{p}{S} \in X$.\\
\\
\WHILE $V$ is linearly dependent\\
\> Let $\beta_1, \ldots, \beta_r$ be such that $\sum_{i=1}^r \beta_i (\vec{S_i} - \vec{S}_{p_i}) = \vec{0}$.\\
\\
\> \IF $\sum_{i=1}^r \beta_i [ \cost(p_i,S_i) - \cost(p_i,S_{p_i})] > 0$\\
\>\> \FOR $i = 1, \ldots, r$\\
\>\>\> $\beta_i = -\beta_i$\\
\> $a = \min \left\{ \min_{i: \beta_i < 0} \left\{ \frac{\x{p_i}{S_i}}{- \beta_i} \right\},
\min_{i: \beta_i > 0} \left\{ \frac{\x{p_i}{S_{p_i}}}{ \beta_i} \right\} \right\}$\\
\> \FOR $i = 1, \ldots, r$\\
\>\> $\x{p_i}{S_{p_i}} = \x{p_i}{S_{p_i}} - a \cdot \beta_i$\\
\>\> $\x{p_i}{S_{i}} = \x{p_i}{S_{i}} + a \cdot \beta_i$\\
\\
\> \IF $\x{p}{S} \in X$ becomes $0$\\
\> \> remove $\x{p}{S}$ from $X$\\
\> \IF $\x{p}{S_p}$ becomes $0$\\
\>\>Select an $\x{p}{S'}$ from $X$ and remove it from $X$.\\
\>\>$S_p$ becomes $S'$.\\
\>\>Update vectors in $V$ with new $S_p$.
\end{pseudocode}
\caption{\func{Restore}}
\label{alg:restore}
\end{algorithm}

\pagebreak

\section{A preprocessing step for the generation of all augmentation profiles}

\begin{algorithm}
\begin{pseudocode}
$\mathcal{P} = \emptyset$\\
\FOR each subset $\{\vec{v}_1, \dots, \vec{v}_d\}$ of $\mathcal{V}=\{-1, 0, 1\}^d$\\
\> Let $V$ be an ordered list whose $i$-th element is $\vec{v}_i$.\\
\> Let $A$ be the matrix whose $i$-th column is $\vec{v}_i$.\\
\> Try to find $A$'s inverse.\\
\> \IF $A$ is not invertible\\
\> \> Continue.\\
\> \FOR each $\vec{w}$ in $\mathcal{V}$\\
\> \> $\vec{\alpha} = A^{-1}\vec{w}$\\
\> \> \FOR $i = 1, \ldots, d$\\
\> \> \> \IF $\alpha_i < 0$\\
\> \> \> \> Continue.\\
\> \> \> \IF $\alpha_i = 0$\\
\> \> \> \> Remove $\vec{v}_i$ from $V$.\\
\> \> \> \> Remove $\alpha_i$ from $\vec{\alpha}$.\\
\> \> Append $-\vec{w}$ to $V$.\\
\> \> Append $1$ to $\vec{\alpha}$.\\
\> \> Sort $V$ lexicographically.\\
\> \> Reorder $\vec{\alpha}$ to match $V$'s order.\\
\> \> Rescale $\vec{\alpha}$ so that the first component is 1.\\
\> \> Add $(V, \vec{\alpha})$ to $\mathcal{P}$.\\
\RETURN $\mathcal{P}$
\end{pseudocode}
\caption{\func{Preprocess}($d$)}
\label{alg:preprocess}
\end{algorithm}

\section{Proof of Lemma~\ref{lem:vecDiff}}

\begin{proof}
The first step is to come up with a linear combination of moves of the form
$(p, S, T)$ where $\x{p}{S} > \x[y]{p}{S}$ and $\x{p}{T} < \x[y]{p}{T}$
that transform $\vec{x}$ into $\vec{y}$.
The profile for the set of moves must be linearly dependent because the net change
to the load on each bin is $0$. However, the resulting profile is not necessarily minimally dependent.
The next step is to find a subset of those moves
that can be applied to $\vec{x}$ whose profile is
minimally dependent and whose $\vec{\alpha}$ has positive coefficients.
The following procedure accomplishes the first step:

\vspace{.1in}
\begin{pseudocode}
Initialize $\vec{z} = \vec{x}$ and $j = 1$.\\
\WHILE $\vec{z} \neq \vec{y}$\\
\> Find a $(p, S, T)$ such that $\x[z]{p}{S} > \x[y]{p}{S}$ and $\x[z]{p}{T} < \x[y]{p}{T}$.\\
\> $\beta_j = \min\{ \x[z]{p}{S} - \x[y]{p}{S},  \x[y]{p}{T} - \x[z]{p}{T}\}$\\
\> $(p_j, S_j, T_j) = (p, S, T)$\\
\> $\x[z]{p}{S} = \x[z]{p}{S} - \beta_j$\\
\> $\x[z]{p}{T} = \x[z]{p}{T} + \beta_j$\\
\> $j = j+1$
\end{pseudocode}
\vspace{.1in}

Define $\mathcal{S}$ to be the set of pairs $(p, S)$ such that $\x[z]{p}{S} > \x[y]{p}{S}$
and ${\cal T}$ to be the set of pairs $(p, T)$ such that $\x[z]{p}{T} < \x[y]{p}{T}$.
In each iteration, if $(p, S, T)$ is the selected move, then
either $(p, S)$ drops out of ${\cal S}$ or $(p, T)$ drops out of ${\cal T}$.
Therefore, the process is finite and a move is never selected twice.
Let $t$ be the number of moves selected in the process.
Applying each move $(p_j, S_j, T_j)$ with magnitude $\beta_j$  transforms
$\vec{x}$ into $\vec{y}$. Since $\vec{x}$ and $\vec{y}$ are both perfectly filled,
the net change to  the load on each bin is $0$:
\[\sum_{j=1}^t \beta_j (\vec{T}_j - \vec{S}_j) = \vec{0}.
\]

In the second step, we adjust the linear combination of moves selected until
its profile is minimally dependent. Let $B$ be the set of indices $j$ such that $\beta_j > 0$.
Initially $B = \{1, \ldots, t\}$.
Define $V_B = \{ \vec{T_j} - \vec{S_j}: j \in B\}$.
The following procedure accomplishes the second step:

\vspace{.1in}
\begin{pseudocode}
\WHILE there is a proper subset of $V_B$ that is linearly dependent\\
\> Select a  ${\bar B} \subseteq B$ such that $V_{\bar B}$ is minimally dependent.\\
\> Let $\{ \gamma_j: j \in {\bar B} \}$ be the unique set of values such that:\\
\>\> $\sum_{j \in {\bar B}} \gamma_j (\vec{T}_j - \vec{S}_j) = \vec{0}$,\\
\>\> and $\min_{j \in {\bar B}} \gamma_j = 1$.\\
\> \IF $\gamma_j > 0$, for every $j \in {\bar B}$\\
\>\> \RETURN $\{(p_j, S_j, T_j): j \in {\bar B}\}$\\
\> \ELSE\\
\>\> $c = \min_{j: \gamma_j < 0} \frac{\beta_j}{-\alpha_j}$\\
\> \FOR each $j \in {\bar B}$\\
\>\> $\beta_j = \beta_j + c \gamma_j$\\
\RETURN $\{(p_j, S_j, T_j): j \in B\}$\\
\end{pseudocode}
\vspace{.1in}

Note that since $\sum_{j \in {\bar B}} \gamma_j (\vec{T}_j - \vec{S}_j) = \vec{0}$,
adding a constant multiple of the sum $\sum_{j \in {\bar B}} \gamma_j (\vec{T}_j - \vec{S}_j)$
to $\sum_{j \in B} \beta_j (\vec{T}_j - \vec{S}_j) $, maintains the condition
that $\sum_{j \in B} \beta_j (\vec{T}_j - \vec{S}_j) = \vec{0} $.

The changes to $\vec{\beta}$ also maintain the condition that $\vec{\beta} \geq \vec{0}$.
If $\gamma_j > 0$, then adding $c \cdot \gamma_j$ to $\beta_j$ can only increase $\beta_j$.
For $j$ such that $\gamma_j < 0$, $c \leq -\beta_j/\gamma_j$, so
\[
\beta_j + c \gamma_j \geq \beta_j + \left( \frac{-\beta_j}{\gamma_j} \right) \gamma_j = 0.
\]
Since $c = -\beta_j/\gamma_j$ for some $j$, at least one $\beta_j$ becomes $0$ and the
set $B$ decrease by at least one index.
Therefore  $V_B$ eventually becomes a minimally dependent set, and the moves corresponding
to $j \in B$ satisfy the properties of being an augmentation.
\end{proof}

\section{Proof of Lemma~\ref{lem:bigAug}}

\begin{proof}
Let $\vec{y}$ be an optimal solution that is also a \bfa.
We define a sequence of assignments
$\vec{z}_0, \vec{z}_1, \ldots, \vec{z}_t$.
We start with $\vec{z}_0 = \vec{x}$ and
 describe how to obtain $\vec{z}_{j+1}$ from $\vec{z}_{j}$.
Let $\mathcal{S}_j$ be the set of pairs $(p, S)$ such that
$z_j(p, S) > y(p, S)$.
Let $\mathcal{T}_j$ be the set of pairs $(p, T)$ such that
$z_j(p, T) < y(p, T)$.
We know from Lemma~\ref{lem:vecDiff} that there is an augmentation
$\mathcal{A}_j$ that can be applied to $\vec{z}_j$ such that all the moves
are of the form $(p, S, T)$, where $(p, S) \in \mathcal{S}_j$
and $(p, T) \in \mathcal{T}_j$.
Apply augmentation $\mathcal{A}_j$ to $\vec{z}_j$ with magnitude $a_j$ to
get $\vec{z}_{j+1}$, where $a_j$ is the largest possible
magnitude with which $\mathcal{A}_j$
can be applied such that
$z_{j+1}(p, S) \geq y(p, S)$ for every $(p, S) \in \mathcal{S}_j$
and $z_{j+1}(p, T) \leq y(p, T)$ for every $(p, T) \in \mathcal{T}_j$.
Note that after the augmentation is applied,
it must be true that for some $(p, S) \in \mathcal{S}_j$, $z_{j+1}(p, S) = y(p, S)$
or for some $(p, T) \in \mathcal{T}_j$, $z_{j+1}(p, T) = y(p, T)$.
Therefore $\mathcal{S}_{j+1} \cup \mathcal{T}_{j+1}$ is a proper 
subset  of $\mathcal{S}_{j} \cup \mathcal{T}_{j}$.
Continue the process until $\mathcal{S}_{t} \cup \mathcal{T}_{t} = \emptyset$
which means that $\vec{z}_t = \vec{y}$.

Since any \bfa has at most $(n+d)$ non-zero variables,
$|\mathcal{S}_{0} \cup \mathcal{T}_{0}| \leq 2(n+d)$ and therefore $t \leq 2(n+d)$.
The $t$ augmentations cause the cost of the assignment to drop by $\Delta$,
so there is at least one augmentation that causes the cost to drop by at least
$\Delta/2(n+d)$. Suppose that the largest drop in cost  happens when 
$\mathcal{A}_j$ is applied with magnitude $a_j$ to $\vec{z}_j$.
We need to establish that $\mathcal{A}_j$ can be applied to $\vec{z}_0$ with magnitude
at least $a_j$. 

Consider a pair $({\bar p}, {\bar S})$ such that 
$z_j({\bar p}, {\bar S})$  decreases when $\mathcal{A}_j$ is applied to $\vec{z}_j$.
Then $({\bar p}, {\bar S}) \in \mathcal{S}_j$. Furthermore since each
$\mathcal{S}_{j} \subseteq \mathcal{S}_{j-1} \subseteq \cdots \subseteq \mathcal{S}_{0}$, then $({\bar p}, {\bar S}) \in \mathcal{S}_i$ for 
any $i$ in the range from $0$ to $j$.
The augmentations $\mathcal{A}_0, \ldots, \mathcal{A}_j$
can only
take mass off of $z_i({\bar p}, {\bar S})$ or leave it the same. 
Therefore $z_0({\bar p}, {\bar S}) \geq z_j({\bar p}, {\bar S})$ for any pair $({\bar p}, {\bar S})$
such that $\mathcal{A}_j$ causes $z_j({\bar p}, {\bar S})$ to decrease. 
Thus, if $\mathcal{A}_j$ can be applied
to $\vec{z}_j$ with magnitude $a_j$, then $\mathcal{A}_j$ can also be applied
to $\vec{z}_0$ with magnitude at least $a_j$.
\end{proof}

\section{Proof of Lemma~\ref{lem:bound-entries}}

\begin{proof}
As a consequence of the Laplace expansion of the determinant, the inverse of $A$ is equal to $\text{adj}(A)/\det(A)$, where $\text{adj}(A)$ is the adjugate matrix of $A$ or transpose of the cofactor matrix of $A$. Since $A$ has integer entries, so does its adjugate. This implies that the elements of $x$ are all of the form $k/\det(A)$ for some integer $k$. But by Hadamard's inequality, the determinant of $A$ is bounded by $d^{d/2}$. So this proves the first statement.
  
Now if the entries of $\vec{b}$ are in $\{-1, 0, 1\}$, then $k \leq d$. Since $A$ is invertible, its determinant is non-zero. Since its entries are integers, the absolute value of its determinant is at least 1. This proves the second statement.
\end{proof}

\section{Proof of Lemma~\ref{lem:bound-bfa}}

\begin{proof}
As in the definition of a \bfa, let $P_{\text{int}}$ denote the items that are integrally and $P_{\text{frac}}$ those that are partially assigned. Also, let $S_p$ be the subset or one of the subsets to which $p$ is assigned, let $F = \{(p,S): x_{p, S}>0 \text{ and } S \neq S_p\}$, and let $\vec{c}$ be the vector of bin capacities. Since $\vec{x}$ is perfectly filled, we have
\[
\sum_{p \in P_{\text{int}}} \x{p}{S_p} \vec{S}_p + \sum_{p \in P_{\text{frac}}} \x{p}{S_p} \vec{S}_p + \sum_{(p,S) \in F} \x{p}{S} \vec{S} = \vec{c}.
\]
Now $\x{p}{S_p} = \size(p) - \sum_{S\neq S_p} \x{p}{S}$ for all $p$, and in particular $\x{p}{S_p} = \size(p)$ for $p \in P_{\text{int}}$. So we have
\[
  \sum_{p} \size(p)\vec{S}_p + \sum_{(p, S) \in F} \x{p}{S} (\vec{S} - \vec{S}_p) = \vec{c}.
\]
By definition of a \bfa, the vectors $\vec{S} - \vec{S}_p$ are linearly independent. Therefore, if we extend this set to a basis of $\{-1, 0, 1\}$-vectors, we can view this equation as the matrix equation 
\[
  A\vec{x} = \vec{c} - \sum_{p} \size(p)\vec{S}_p
\]
where $A$'s columns are the basis vectors. Since $A$ has entries in $\{-1, 0, 1\}$ and the right hand vector has integer entries, an application of Lemma~\ref{lem:bound-entries} yields the result.
\end{proof}

\section{Proof of Lemma~\ref{lem:mainLoop}}

\begin{proof}
Let $C = \max_{p, S} \cost(p, S)$. The cost of the initial assignment is at most $nC$.
Since the costs are non-negative, the difference in cost between the initial assignment and
an optimal assignment is at most $nC$.

By Lemma~\ref{lem:bound-bfa}, for any \bfa $\vec{x}$, every $\x{p}{S}$ is an integer 
multiple of some $1/\ell$ where $\ell$ is an integer bounded by
$d^{d/2}$. Since the costs are integers, the cost of 
$\vec{x}$ is also an integer multiple of $1/\ell$.
Consider two \bfa's, $\vec{x}$ and $\vec{y}$  with different costs. The cost of $\vec{x}$ is a multiple of $1/\ell$
and the cost of $\vec{y}$ is  a multiple of $1/\ell'$, where $\ell$ and $\ell'$ are both
integers bounded by $d^{d/2}$. 
If $\ell = \ell'$,  then the difference in costs between $\vec{x}$ and $\vec{y}$
is at least $1/d^{d/2}$. If $\ell > \ell'$, the difference in cost is at least
\[
\frac{1}{\ell'} - \frac{1}{\ell} = \frac{\ell - \ell'}{\ell\ell'} \geq \frac{1}{d^d}.
\]

Lemma~\ref{lem:bigAug} indicates that if the difference in cost between the current assignment and the optimal assignment is $\Delta$, there is an augmentation that reduces
the cost by at least $\Delta/2(n+d)$ and Lemma~\ref{lem:findAug} indicates that the augmentation returned by \func{FindAugmentation}
reduces the cost by at least $1/2d$ times the best possible.
Therefore, each iteration reduces the difference in cost between the
current assignment and the optimal assignment by at least a factor of $1 - 1/(4d(n+d))$.
The number of iterations is the smallest $t$ such that
\[
nC \left( 1 - \frac{1}{4d(n+d)} \right)^t < \frac{1}{d^d},
\]
which is $O(nd^2 \log (d nC)) $.
\end{proof}

\section{Proof of Lemma~\ref{lem:bound-a}}

\begin{proof}
Recall that the entries of $\vec{\alpha}$ were obtained as the solution to the equation $A\vec{\alpha} = \vec{w}$ where the columns of $A$ and the vector $\vec{w}$ have entries in $\{-1, 0, 1\}$. By Lemma~\ref{lem:bound-entries}, $1/d^{d/2} \leq \alpha_i \leq d$.

An upper bound on $a$ is the ratio of the maximum possible value for $\x{p}{S}$ over the minimum possible value for $\alpha_i$. The highest value that $\x{p}{S}$ can achieve is $Z = \max_p \size(p)$ because of (\ref{eq:condition1}). So let $a_{\max} = d^{d/2} Z$.

Similarly a lower bound on $a$ is the ratio of the minimum possible $\x{p}{S}$ over the maximum possible value for $\alpha_i$. By Lemma~\ref{lem:bound-bfa}, $\x{p}{S}$ is least $1/d^{d/2}$. So we can take $a_{\min}$ to be $1/d^{d/2+1}$.
\end{proof}
\end{document}